\numberwithin{figure}{section}
\newtheorem{remark}{Remark}
\newtheorem{definition}{Definition}
\newtheorem*{proof}{Proof}
\newtheorem{proposition}{Proposition}
\newtheorem{lemma}{Lemma}
\newtheorem{theorem}{Theorem}
\newtheorem{corollary}{Corollary}
\begin{document}

\title{PINN-based viscosity solution of HJB equation}
\author{Tianyu Liu, Steven Ding, Jiarui Zhang, Liutao Zhou}
\date{}
\maketitle

\begin{abstract}
This paper proposed a novel PINN-based viscosity solution for Hamiltonian-Jacobi-Bellman (HJB) equations corresponding for both infinite time and limited time optimal control cases. Although there exists PINN-based work for solving HJB, to the best of our knowledge, none of them gives the solution in viscosity sense. This paper reveals the fact that using the (partial) convex neural network, one can guarantee the viscosity solution. Also we proposed a new training method called strip training method, through which the neural network (NN) could easily converge to the true viscosity solution of HJB despite of the initial parameters of NN, and local minimum could be effectively avoided during training.                                            
\end{abstract}
\section{Introduction}

Physics-Informed Neural Network(PINN) is a powerful tool for solving Partial Differential Equation(PDE) and even for identifying the missing parameters. Since the  conception of PINN was first proposed by M. Raissi etc. in \cite{raissi2019physics}, lots of PINN-based applications have been created and showed us its great power in solving PDEs that come from various practical problems, e.g. fluid mechanics, medical diagnosis, etc.\cite{sahli2020physics,cai2021physics}.  Among all the problems, the one that draws great intension for us is the solving of the Hamiltonian-Jacobi-Equation (HJ equation). It is worth noticing that HJ equation has a deep connection with the Regular equations in classical mechanics. On the one hand, the existence of the HJ equation comes from the transformation of Hamilton quantity in order to solve the regular equations in Hamiltonian system, on the other hand, the regular equations are the characteristic lines of the HJ equation. 

Another important application of the HJ equation originates from optimal control, where the Hamiltonian has the form of minimum of a family of linear functions. In this case the HJ equation has the name of HJB where the B represents the Bellman in honor of his contribution to principle of optimality. With the HJB equation, it is easy to find the optimal cost for a finite or infinite horizon optimal control problem. Actually, the difference between the HJB and the popular reinforcement learning technique lies in the knowledge of system or environment model, and also if we know the initial condition when start learning. With an accurate model of the system, we can theoretically solve the minimum cost and the corresponding optimal control very easily.

But there is a huge gap between reality and theory: the difficulty of solving the HJB by using traditional numerical methods such as Finite Difference (FD) method and Semi-Lagrangian(SL) method will increase exponentially with the increase of the state dimension of the system. This problem is called the "curse of dimensionality"\cite{rust1997using}. Since PINN doesn't need to discretize the differential (FD method) or search step by step through discrete dynamic programming (SL method), PINN has theoretically a better performance especially when come across high dimensional system. Some of the existing work using PINN for solving HJB equations could be seen in \cite{furfaro2022physics,mukherjee2023bridging,he2023learning}

Among all these research, a problem is that none of those research considered the solution in viscosity sense. Actually there exists more than one locally Lipschitz continuous function that satisfy the HJB almost everywhere. The key point is to find the viscosity solution that corresponds to the optimal control. Otherwise, if the PINN is trained directly with the loss term from the HJB, the neural network will converge to other solutions instead the viscosity one, unless the start point of training happens to be close to the answer. The existing solution is to first train the NN to a "guess" solution, which is usually based on the quadratic form of system states. However, it is obvious that this guess method can not guarantee to be correct every time.

In this paper, two PINN-based viscosity solutions for HJB equation are proposed. One is adding an extra term in the loss function, the other is using directly a convex network. Section \ref{sec:motivation} and Section \ref{sec:viscosity introduce} gives the motivation of this paper and some basic introduction of the viscosity solution. Section \ref{sec:proof of viscostiy} proves the equivalence of the unique solution of HJB and the convex property of the solution. Section \ref{sec: numerical method} provides two methods to guarantee the uniqueness of solution from neural network and the constructive proof of the convex PINN. Section \ref{sec:simulation} gives the simulation results of the both methods.

\section{Motivation}
\label{sec:motivation}
Before formally introducing the scheme of the viscosity solution of the HJB equation,
it is of great importance to introduce the motivation of studying
the solution in viscosity sense instead of classical sense. On the 
one hand, it is too strong a condition to require differentiable at 
every point in the domain of a function. Actually, the boundary 
value problem for a first order P.D.E. does not admit a global $C^1 
$ solution in general. Hence it is interesting for us to study the 
solution in more generalised case, differentiable except limited 
points. On the other hand, there could be more than one solution 
even if we could find a smooth solution for our problem. An example 
is given below.

\begin{align*}
		\dot{x} &= ax+bu \\
		V^{*} &= \min_{u \in U} \int_{0}^{\infty} \left( x^2 +u^2\right) dt
\end{align*}
where $x$, $u$ are the system states and control inputs in $\mathbb{R}^{1}$. The HJB equation corresponding is given below
\begin{equation*}
	H=\min_{u \in U}\left\{ V_x\dot{x}+x^2 +u^2 \right\}=0
\end{equation*} 

with the boundary condition $V(0)=0$. H could be rewritten as 
\begin{equation*}
	H=-\frac{1}{4} {V_x}^2b^2 + axV_x + x^2=0
\end{equation*} 

From the above equation, it could be imagined that there are at lease two analytical solutions. Actually these solutions are

\begin{align*}
		V_x &=2\frac{a \pm \sqrt{a^2+b^2}}{b^2}x  \\
		V(x) &= \frac{a \pm \sqrt{a^2+b^2}}{b^2} x^2
\end{align*}
we could see that both solutions satisfy the boundary condition, which leads to a natural question that which is the true solution corresponding to our optimal control? It is noticeable that in these two solutions, one is convex, and the other is concave. In our example, it is easy to prove that, the convex one
\begin{align*}
		V(x) &= \frac{a + \sqrt{a^2+b^2}}{b^2} x^2
\end{align*}
is our true solution. 
 
For a more complex multi-dimensional and nonlinear problem, the solution of HJB could be more complicated. Actually for a nonlinear system, the generalized solution, i.e. locally Lipschitz continuous solution that satisfies HJB almost everywhere, doesn't have uniqueness even the stability\cite{bardi1997optimal}. Next two sections would show that the optimal control problem under the context of concave Hamiltonian would lead to the unique semiconvex solution of equation, just like the example shows.
\section{Viscosity solution of HJB}
\subsection{HJB for optimal control problem}
Consider an affine system
\begin{align}
\label{affine system}
		\dot{x} &= a(x)x+b(x)u, \quad u(t)\in U, \quad t\in \left[0,T\right]
\end{align}
with the cost function

\begin{align}
\label{cost_function}
	V(t,x) &= \inf_{u \in U} \int_{0}^{T} \left(x^TQx +u^TRu\right) dt +\psi(x(T))=\inf_{u \in U} \int_{0}^{T} L(x,u) dt +\psi(x(T))
\end{align}
where $x \in R^n$ and $u \in R^m$, and $Q,R$ are positive definite matrix, and the set $U$ is all of the admissible control values as defined below.

\begin{align}
	U:= \{ u: \mathbb R \to \mathbb R^m \ \ measureable, \ \ \ u(t)\in U \ for \ a.e. \ t \}
\end{align}

Given the initial data $x(s)=y$, consider the control u and the corresponding trajectory x, we have the following theorem.

\begin{theorem}
	Given an affine system in eq.\ref{affine system}, and the optimal cost function defined in eq.\ref{cost_function}, then $V^{*}(t,x)$ satisfies the following HJB equation  
\begin{align}
\label{HJB_finite}
		V_t + \inf_{u \in U} \{ L(x,u) + V_x\dot{x}\} = V_t + H(x,u,V_x) = 0 
\end{align}
where the $H(x,u,V_x)$ is usually defined as the Hamiltonian
\end{theorem}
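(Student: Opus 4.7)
The plan is to derive the HJB equation from the Dynamic Programming Principle (DPP), which is the standard route for this classical statement. First I would establish the DPP: for any $h>0$ small enough and any admissible $u\in U$,
\begin{equation*}
V^{*}(t,x) \;=\; \inf_{u\in U}\Bigl\{\int_{t}^{t+h} L(x(s),u(s))\,ds \;+\; V^{*}(t+h,x(t+h))\Bigr\},
\end{equation*}
where $x(\cdot)$ is the trajectory generated by control $u(\cdot)$ starting from $(t,x)$. The justification is Bellman's principle of optimality: splitting the cost integral at time $t+h$ and observing that the tail cost, once the state $x(t+h)$ is reached, is minimized precisely by $V^{*}(t+h,x(t+h))$.

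Next I would assume $V^{*}\in C^{1}$ on the relevant set (this is the classical regularity assumption that motivates the later switch to the viscosity framework) and Taylor-expand:
\begin{equation*}
V^{*}(t+h,x(t+h)) \;=\; V^{*}(t,x) \;+\; h\,V_{t}(t,x) \;+\; V_{x}(t,x)\cdot\bigl(x(t+h)-x(t)\bigr) \;+\; o(h).
\end{equation*}
Using the dynamics \eqref{affine system}, $x(t+h)-x(t) = h\bigl(a(x)x+b(x)u(t)\bigr)+o(h)$, and the running-cost integral is $h\,L(x,u(t))+o(h)$ by continuity. Substituting into the DPP, cancelling $V^{*}(t,x)$ on both sides, dividing by $h$, and letting $h\downarrow 0$ while choosing $u(t)$ to be a constant control $u\in U$ (and then taking the infimum), I obtain
\begin{equation*}
0 \;=\; V_{t}(t,x) \;+\; \inf_{u\in U}\bigl\{L(x,u)+V_{x}(t,x)\cdot(a(x)x+b(x)u)\bigr\},
\end{equation*}
which is exactly the claimed identity once we recognize the bracket as $H(x,u,V_x)$.

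The two directions of the infimum step require a bit of care, and this is the part I expect to be the main obstacle. For the inequality $V_{t}+\inf_{u}\{\cdots\}\ge 0$, I would plug in a constant control $u$ on $[t,t+h]$, which gives an upper bound on the infimum in the DPP and hence a lower bound after rearranging. For the reverse inequality, I would choose a near-optimal control $u^{\varepsilon}(\cdot)$ on $[t,t+h]$ achieving the DPP up to $\varepsilon h$, and use continuity of $L$, $a$, $b$ to argue that the time-averaged Hamiltonian along $u^{\varepsilon}$ is not smaller than $\inf_{u\in U}\{L(x,u)+V_{x}\cdot(a(x)x+b(x)u)\}$ up to $o(1)$ as $h\downarrow 0$. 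Passing $h\downarrow 0$ and then $\varepsilon\downarrow 0$ closes the argument. The smoothness assumption on $V^{*}$ is of course non-trivial in general, but it is precisely the inadequacy of this assumption that motivates the viscosity-solution development in the following sections, so here it is legitimate to take it as standing hypothesis.
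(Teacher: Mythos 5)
Your proposal is correct and takes essentially the same route as the paper: both derive the HJB from the dynamic programming principle at points where $V^*$ is differentiable, the paper packaging the two inequalities as monotonicity of $\Phi^u(t)=\int_s^t L(x(t),u(t))\,dt+V^*(t,x(t))$ (nondecreasing along arbitrary controls, constant along the optimal one) while you use the equivalent incremental-quotient/Taylor-expansion formulation. The one substantive nuance is in the reverse inequality: the paper assumes an exact optimal control $u^*$ exists so that $\Phi^{u^*}$ is constant, whereas your use of $\varepsilon$-optimal controls avoids that existence assumption and is slightly more careful on this point.
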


\begin{proof}
First define a function  $\Phi^u(t)$ as 
\begin{align*}
		\Phi^u(t)  := \int_s^tL(x(t),u(t))dt+V^*(t,x),    \quad   t\in \left[ s,T\right]
		\end{align*}
	From the dynamic programming principle
\begin{align}
		 V^*(t_1,x(t_1)) =\inf_{u \in U} \{ \int_{t_1}^{t_2}\left[ L(x(t),u(t))dt+V^*(t_2,x_2) \right]\}
\end{align}	
it is easy to see that 

	\begin{align}
		\Phi^u(t_1) - \Phi^u(t_2) = V^*(t_1,x(t_1)) -\int_{t_1}^{t_2}\left[ L(x(t),u(t))dt+V^*(t_2,x_2) \right]\leq 0
	\end{align}	
	Thus function $\Phi^u(t)$ is non-decreasing, and we have  
\begin{align}
		\frac{d}{d\tau }
		\int_{s}^{\tau}\left[ L(x(t),u(t))dt+V^*(\tau,x_{\tau}) \right] & \geq 0 \\
		V^*_t +  \{ L(x,u) + V_x\dot{x}\}  & \geq 0 
\end{align}		
Notice that the control $u$ is optimal if and only if the cost function is minimum $V^*(s,y)$, hence the function $\Phi^u(t)$ is constant (optimal cost) if $u = u^*$, $\forall t \in [s,T]$
we have
\begin{align}
		\frac{d}{d\tau }
		\int_{s}^{\tau}\left[ L(x(t),u(t))dt+V^*(\tau,x_{\tau}) \right] & = 0 \\
		V^*_t +  \{ L(x,u^*) + V_x(a(x)+b(x)u^*)\}  & = 0 
\end{align}

Through the above equations, it is easy to see that at a point $(s,y)$	where it is differentiable, the value function satisfies the HJB	in eq.\ref{HJB_finite}
\end{proof}
\begin{remark}
	It is worth mentioning that, although in some literatures, e.g. \cite{bressan2011viscosity,cannarsa2004semiconcave}, the HJB is given in the form
	\begin{align}
		-\{V_t + \inf_{u \in U} \{ L(x,u) + V_x\dot{x}\}\} = -V_t + \sup_{u \in U} \{ -V_x\dot{x} -L(x,u)\} = 0 
\end{align}
Although the difference of HJB above with eq.\ref{HJB_finite} is the minus before the whole equation, the corresponding optimal cost function (viscosity solution) are totally different \cite{bressan2011viscosity,bardi1997optimal}. The reason could be see clearly from the definition \ref{viscosity_solution_def1} in next subsection.
%

It is worth mentioning that, the solution of HJB in eq.\ref{HJB_finite} does not only share the meaning of optimal cost in our problem, but also can serve as the Lyapnov function for infinite horizon optimization problem, when the $Q$ and $R$ in the cost function \ref{cost_function} are positive definite and semi positive definite. For example, it is easy to see that in infinite horizon optimization problem,   $V(x)$ is greater or equal than 0 and has the global minimum at $V(0)=0$. Further more,
\begin{align}
	V^*_x\dot{x}+L(x,u^*) =0& \\
	\frac{dV}{dt} = - L(x,u^*)<0& 
\end{align}
\end{remark}


\subsection{Viscosity solution}
\label{sec:viscosity introduce}

\begin{definition}
\label{viscosity_solution_def1}
Given a general nonlinear first order PDE
\begin{align}
\label{HJB}
	F(x, V(x), DV(x)) & =0 & & \text { in } {R}^n \times[0, \infty) \\
	V & =g & & \text { on } {R}^n \times\{t=T\} \nonumber
\end{align}
the solution $V(x,t)$ is called the viscosity solution of eq.\ref{HJB}

\begin{align}
\label{viscosity}
	V(x,t):=\left\{ 
	\left.\lim_{\epsilon \to 0}V^{\epsilon}\right| 	
	V^{\epsilon}_t+H(V^{\epsilon}_x,x) =\epsilon \triangle V^{\epsilon}
	\right\}
\end{align}
where the $\triangle$ is the divergence notation, and $\epsilon$ is a small positive number.	
\end{definition}

Some useful properties of viscosity solution are listed below:
\begin{itemize}
	\item Existence. For a nonlinear first order PDE in eq.\ref{HJB}, the viscosity solution exists.
	\item Uniqueness. For such a problem, there exists at most one viscosity solution.
	\item Stability. The viscosity solution is stable in the following sense \\
	for any disturbance $d>0$, $V^d$ is a viscosity solution of 
	\begin{align}
\label{HJBd}
	V^d_t+H(V_x^d, x) & =0 & & \text { in } {R}^n \times(0, \infty) \\
	V^d & =g^d & & \text { on } {R}^n \times\{t=T\} \nonumber
\end{align}
if parameters of the eq.\ref{HJBd} converge to eq.\ref{HJB}, e.g. $H^d \to H$, $g^d \to g$, and $V^d \to V$ correspondingly, then $V$ is the viscosity solution of eq.\ref{HJB}.
\end{itemize}
The proof of the above properties could be find in lots of books \cite{evans2022partial,bressan2011viscosity,bardi1997optimal}

\begin{remark}
	The name of viscosity solution comes from the right hand side of the equation above: the vanishing viscosity term. With the quasilinear elliptic equations above, for any $\epsilon>0$, eq.\ref{viscosity} has a unique and bounded solution $V^{\epsilon} \in C^2(R^n)$\cite{bardi1997optimal}. When the viscosity term vanishes from some small positive quantity to zero,  $V^{\epsilon}$ converges to the unique solution. Apart from defining the uniqueness of solution, viscosity solution is also a general solution that satisfies equation almost everywhere.
	\end{remark}

It is worth mentioning that, another definition of the viscosity solution which is more commonly used for analysis is given below

\begin{theorem}
	A function $V(t,x) \in C(\Omega)$ is a viscosity solution \cite{bressan2011viscosity} of \ref{HJB} if 
	\begin{align}
	\label{subsolution}
	V(T,x)=g(x) & &\forall x \in R^n  \\
		q+H(p, x,u) & \leq0 & \forall (t,x) \in (0,T)\times \Omega, (q,p) \in D^+V(t,x) \\
q+H(p, x,u) & \geq0 & \forall (t,x) \in (0,T)\times \Omega, (q,p) \in D^-V(t,x) 
			\end{align}
			where $D_{t,x}^{1,+}V(x) $ and $D_{t,x}^{1,-}V(x) $ are the superdifferentials and subdifferentials of V at x which are given
	\begin{align}
	\label{supersolution}
		D_{t,x}^{1,+}V(t,x) =\left\{
		\begin{bmatrix} q\\p \end{bmatrix} 
		\in R\times R^n \left. \right| \varlimsup_{
		\scriptsize {
		\begin{array}{c}   
		    y \to x \\  
    s \to t, s \in (0,T) \\  
  \end{array}}
  }  \frac{V(s,y)-V(t,x)-q(s-t)-<p,y-x>}{\left|y-x\right|+\left|s-t\right|}\leq 0\right\}
	\end{align}
	\begin{align}
	\label{subsolution}
		D_{t,x}^{1,-}V(t,x) =\left\{
		\begin{bmatrix} q\\p \end{bmatrix}
		 \in R\times R^n \left. \right| \varliminf_{
		\scriptsize {
		\begin{array}{c}   
		    y \to x \\  
    s \to t, s \in (0,T) \\  
  \end{array}}
  }  \frac{V(s,y)-V(t,x)-q(s-t)-<p,y-x>}{\left|y-x\right|+\left|s-t\right|}\geq 0\right\}
	\end{align}
Further more, if $s \to t^+$, then we get the right sup and subdifferentials $D_{t^+,x}^{1,+}V(t,x)$ and  $D_{t^+,x}^{1,-}V(t,x)$. Similarly, if $s \to t^-$, then we get the left sub and subdifferentials $D_{t^-,x}^{1,+}V(t,x)$ and  $D_{t^-,x}^{1,-}V(t,x)$
\end{theorem}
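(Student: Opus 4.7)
The plan is to establish the equivalence between Definition \ref{viscosity_solution_def1} (the vanishing-viscosity formulation) and the sub/superdifferential characterization stated above. The technical bridge is the classical Crandall--Lions observation that $(q,p)\in D_{t,x}^{1,+}V(t,x)$ is equivalent to the existence of a $C^1$ test function $\phi$ such that $V-\phi$ attains a local maximum at $(t,x)$ with $\phi_t(t,x)=q$ and $D_x\phi(t,x)=p$; the analogous statement with local minima holds for $D_{t,x}^{1,-}V$. I would record this as a preparatory lemma whose proof is a direct Taylor-expansion argument.

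For the implication ``vanishing-viscosity limit $\Rightarrow$ superdifferential inequality,'' I would fix $(q,p)\in D_{t,x}^{1,+}V(t,x)$ and choose a test function $\phi$ as in the lemma. To guarantee that perturbed maximum points return to $(t,x)$, I would replace $\phi$ by $\phi(s,y)+|y-x|^{2}+|s-t|^{2}$, making the local maximum of $V-\phi$ strict. Since $V^{\epsilon}\to V$ uniformly on compact sets by the stability property of the previous subsection, $V^{\epsilon}-\phi$ attains a local maximum at some $(t_{\epsilon},x_{\epsilon})\to(t,x)$. First-order optimality yields $V^{\epsilon}_{t}(t_\epsilon,x_\epsilon)=\phi_{t}(t_\epsilon,x_\epsilon)$ and $V^{\epsilon}_{x}(t_\epsilon,x_\epsilon)=D_{x}\phi(t_\epsilon,x_\epsilon)$, while the second-order condition at an interior maximum gives $\triangle V^{\epsilon}(t_\epsilon,x_\epsilon)\leq\triangle\phi(t_\epsilon,x_\epsilon)$. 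Substituting into the regularized equation $V^{\epsilon}_{t}+H(V^{\epsilon}_{x},x)=\epsilon\triangle V^{\epsilon}$ produces
\[
\phi_{t}(t_{\epsilon},x_{\epsilon})+H\bigl(D_{x}\phi(t_{\epsilon},x_{\epsilon}),x_{\epsilon}\bigr)\;\leq\;\epsilon\,\triangle\phi(t_{\epsilon},x_{\epsilon}).
\]
Sending $\epsilon\to 0$ and using continuity of $H$, $\phi_{t}$, $D_{x}\phi$ yields $q+H(p,x)\leq 0$. The subdifferential inequality $q+H(p,x)\geq 0$ is obtained by the mirror-image argument: replace the perturbation with $\phi-|y-x|^{2}-|s-t|^{2}$, work with a strict local minimum of $V^{\epsilon}-\phi$, and reverse the second-order inequality. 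The terminal condition $V(T,x)=g(x)$ is inherited directly from $V^{\epsilon}(T,\cdot)=g^{\epsilon}\to g$ via the stability property.

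For the converse direction, any continuous function $V$ satisfying the two differential inequalities together with the terminal data must coincide with the vanishing-viscosity limit, because the latter also satisfies these inequalities and viscosity solutions in the sub/superdifferential sense are unique (the comparison-principle-based uniqueness cited in the properties list); I would state this as the invocation of a standard comparison argument rather than redo it.

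The main obstacle will be the bridging lemma and the accompanying strict-maximum perturbation: without the strict-maximum upgrade one cannot guarantee that the maximizers of $V^{\epsilon}-\phi$ converge back to $(t,x)$, and the pointwise limit identity collapses. A secondary subtlety is that the second-order bound $\triangle V^{\epsilon}\leq\triangle\phi$ requires $V^{\epsilon}\in C^{2}$, which is precisely what the elliptic regularization in \eqref{viscosity} supplies; this is where the viscosity term plays its essential analytical role and where any attempt to skip the regularization would fail.
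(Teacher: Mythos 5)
Your proposal is correct in outline, but note that the paper does not actually prove this theorem: it states the sub/superdifferential characterization as a known result with a citation to \cite{bressan2011viscosity}, and the only accompanying commentary is the remark on the geometric (tangent-hyperplane) meaning of $D_{t,x}^{1,\pm}V$ together with a bare assertion of equivalence with Definition \ref{viscosity_solution_def1}. What you have written is therefore the standard Crandall--Lions consistency argument from the cited literature (see also \cite{evans2022partial}): the bridging lemma identifying $(q,p)\in D_{t,x}^{1,+}V(t,x)$ with $(\phi_t,D_x\phi)$ for a $C^1$ function $\phi$ touching $V$ from above, the strict-maximum perturbation, first- and second-order optimality at the maximizers of $V^{\epsilon}-\phi$, and the passage $\epsilon\to 0$, with the converse reduced to the comparison principle. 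This buys the paper what it lacks, namely an actual argument rather than a pointer; the paper's route buys brevity. Two caveats you should make explicit. First, your appeal to the paper's ``stability'' property for locally uniform convergence $V^{\epsilon}\to V$ is a slight misuse: that property, as stated in the paper, concerns perturbed Hamiltonians $H^{d}\to H$, whereas Definition \ref{viscosity_solution_def1} posits only a limit without specifying its mode, so locally uniform convergence must be assumed (or extracted from the uniform bounds on $V^{\epsilon}\in C^2$ that the paper invokes from \cite{bardi1997optimal}) for the perturbed-maximum argument to produce maximizers $(t_{\epsilon},x_{\epsilon})\to(t,x)$. Second, since the theorem is stated only as an ``if'' (the inequalities imply $V$ is the viscosity solution), the burden of proof is precisely your converse direction; your forward direction then plays the role of a lemma showing the vanishing-viscosity limit itself satisfies the inequalities, after which uniqueness via comparison, which the paper likewise only cites, closes the argument. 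With these points spelled out, your proof is sound and supplies exactly what the paper outsources to the references.
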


\begin{remark}
	From the definition, it could be seen that a vector $\begin{bmatrix} q^T p^T \end{bmatrix}^T$ is a sub differential if and only if the hyperplane $(s,y) \to V(t,x)+q(s-t)+<p,y-x>$ is tangent from below to the curve of V at the point (t,x). The supdifferential is similar but in the opposite direction. It is also worth to mention that the viscosity solution defined by this way is equivalent to the definition \ref{viscosity_solution_def1}
\end{remark}

\section{Uniqueness of semiconvex solution in viscosity sense}
\label{sec:proof of viscostiy}
This section proves the uniqueness of a semi convex solution that satisfy eq.\ref{HJB}, given the assumption that Hamiltonian is concave. 
\subsection{Proof of concave Hamiltonian}
\label{sec:3.1}
\begin{lemma}
\label{concavehamiltonian}
	For affine system in eq.\ref{affine system}
with the cost function in eq.\ref{cost_function},
the hamiltonian is given by 
\begin{align}
\label{Hamiltonian}
		\min_{u \in U} H(V_x, x, u) = 
		\min_{u \in U}\left\{ V_x \dot{x} + x^TQx +u^TRu
	\right\}
\end{align}
when the Lagrangian L is unbounded below in x, the Hamiltonian takes on the value $-\infty$. Since the Hamiltonian is the pointwise minimum of a family of affine function of $V_x$. It is concave, even when problem eq.\ref{cost_function} is not convex\cite{boyd2004convex}.
\end{lemma}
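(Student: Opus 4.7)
The plan is to verify the two assertions in the lemma separately: first, that for each fixed admissible $u$ the map $V_x \mapsto H(V_x,x,u)$ is affine; second, that pointwise minimization over $u$ preserves concavity in the gradient variable $V_x$. The $-\infty$ case, where the Lagrangian is unbounded below in $x$, only needs to be acknowledged via the usual extended-real convention, since an infimum equal to $-\infty$ is trivially concave.

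First I would substitute the dynamics $\dot{x}=a(x)x+b(x)u$ into the Hamiltonian to obtain
$$H(V_x,x,u)= V_x\bigl(a(x)x+b(x)u\bigr)+x^TQx+u^TRu.$$
With $x$ and $u$ held fixed, the only $V_x$-dependent piece is the linear term $V_x\bigl(a(x)x+b(x)u\bigr)$, whereas $x^TQx+u^TRu$ is constant in $V_x$. Hence $V_x\mapsto H(V_x,x,u)$ is affine, and in particular both convex and concave as a function of the gradient variable.

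Next I would invoke the standard fact that the pointwise infimum of a family of concave (here, affine) functions is concave. Writing $p_\lambda=\lambda p_1+(1-\lambda)p_2$ for $\lambda\in[0,1]$, affinity in $V_x$ gives $H(p_\lambda,x,u)=\lambda H(p_1,x,u)+(1-\lambda) H(p_2,x,u)$ for every fixed $u$; bounding each term on the right below by $\min_{u'\in U} H(p_i,x,u')$ and then taking the minimum over $u$ on the left yields
$$\min_{u\in U}H(p_\lambda,x,u)\;\geq\;\lambda\min_{u\in U}H(p_1,x,u)+(1-\lambda)\min_{u\in U}H(p_2,x,u),$$
which is precisely concavity of the minimized Hamiltonian in $V_x$.

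The point to emphasize --- rather than a real obstacle --- is that concavity is claimed in the gradient variable $V_x$, not jointly in $(x,u)$. The quadratic penalty $u^TRu$, the possible non-convexity of the admissible set $U$, and the state-dependence inside $a(x)x+b(x)u$ are all irrelevant once $x$ and $u$ are frozen: they collapse into constants and the residual $V_x$-dependence is linear. This is exactly what justifies the lemma's concluding parenthetical that concavity persists even when the underlying optimal-control problem (\ref{cost_function}) is not convex, and it is the property that will be leveraged in the next subsection to pin down uniqueness of the semiconvex viscosity solution.
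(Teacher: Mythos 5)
Your proof is correct, and it takes a different (and more complete) route than the paper's. The paper argues by contradiction: it supposes the minimized Hamiltonian is convex, asserts there would then exist $p_1$, $p_2$, $t$ with $H\bigl(p_1+t(p_2-p_1)\bigr)<H(p_1)+t\bigl(H(p_2)-H(p_1)\bigr)$, and dismisses this as impossible ``since the line is already the minimum'' --- a one-line assertion that never makes explicit the two facts doing the real work, namely that $V_x\mapsto H(V_x,x,u)$ is affine for each frozen $u$, and that the pointwise minimum over $u$ therefore cannot dip below the chord of its own values. (As written, the paper's argument also conflates ``not concave'' with ``convex,'' and the inequality it exhibits is the negation of concavity rather than an instance of convexity.) You instead give the direct argument: substitute the dynamics to isolate the linear $V_x$-dependence, then run the standard inequality chain $\min_{u}H(p_\lambda,x,u)\geq\lambda\min_{u}H(p_1,x,u)+(1-\lambda)\min_{u}H(p_2,x,u)$ obtained by bounding each affine evaluation below by the respective minima before minimizing on the left. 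This buys you a proof that is actually checkable step by step, handles the $-\infty$ case cleanly via the extended-real convention (which the paper ignores), and makes transparent why concavity in $V_x$ survives non-convexity of $U$ and of the control problem --- the point the lemma's citation to Boyd is gesturing at. The paper's contradiction framing, if repaired, would ultimately have to invoke exactly the same two facts, so your version can be read as the rigorous core that the paper's sketch presupposes.
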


\begin{proof}
	suppose that the Hamiltonian in eq.\ref{Hamiltonian} is convex, there must exists $x_1, x_2, t$ such that
	\begin{align*}
	H(V_{x_1}+t(V_{x_2}-V_{x_1}))<H(V_{x_1})+t(H(V_{x_2})-H(V_{x_1}))
\end{align*}
but this impossible, since the line $H(V_{x_1})+t(H(V_{x_2})-H(V_{x_1}))$ is already the minimum.
\end{proof}
\subsection{Equivalence of convex solution and viscosity solution}
Most reference gives the proof in the form of opposite assumptions and opposite conclusions\cite{bardi1997optimal,cannarsa2004semiconcave}, e.g. when the Hamiltonian is convex, the unique viscosity solution is concave. This part of proof would try to give the opposite theorem that is more familiar to control community and the entire proof refers to the skill in\cite{cannarsa2004semiconcave}.

\begin{definition}
  \label{reachable gradients}
	Let V(t,x): $[0,T]\times R^n \to R$ be locally lipschitz. A vector $[q^T p^T]^T \in R^n$ is called a reachable gradient of V(t,x) at $(t,x) \in [0,T]\times R^n$ if a sequence $\left\{[t_k^T x_k^T]^T\right\} \subset [0,T]\times R^n -  \left\{[t^T x^T]^T\right\}$ exists such that V is differentiable at $[t_k^T x_k^T]^T$ for each $k \in N$, and
	\begin{equation}
		\lim_{k \to \infty} 
		\begin{bmatrix} t_k \\ x_k \end{bmatrix} 
		= 
		\begin{bmatrix} t \\ x \end{bmatrix}
		, \lim_{k \to \infty} D^*V(t_k,x_k) = \begin{bmatrix} q \\ p \end{bmatrix}, 
	\end{equation}
	The set of all reachable gradients of V at (t,x) is denoted by $D^*V(t,x)$
\end{definition}
%


	

\begin{lemma}
\label{lemma:2 supdiff}
	Let V(t,x): $[0,T]\times R^n \to R$, $V(t,\cdot)$ be a convex function. Then the following properties hold true
	
   $D_{t,x}^{1,+}V(t,x)$ are nonempty if and only if V is differentiable at (t,x), so
 		\begin{equation}
			D_{t,x}^{1,+}V(t,x)=D_{t,x}^{1,-}V(t,x)=DV(t,x)
		\end{equation}

%
\end{lemma}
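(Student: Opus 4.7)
The plan is to prove both directions of the equivalence by leveraging the fact that convexity of $V(t,\cdot)$ forces the one-sided differential structure to collapse to a single point. For the easy direction, suppose $V$ is differentiable at $(t,x)$ with gradient $DV(t,x)=(V_t,\nabla_x V)$. The definition of differentiability immediately yields $DV(t,x) \in D^{1,+}V(t,x)\cap D^{1,-}V(t,x)$, so both sets are nonempty. Conversely, for any $(q,p) \in D^{1,+}V(t,x)$, the limsup condition combined with differentiability forces $\langle (q,p)-DV(t,x),(s-t,y-x)\rangle \leq o(|s-t|+|y-x|)$ along every direction; probing each coordinate direction pins $(q,p)$ down to $DV(t,x)$, and the reverse inequality handles $D^{1,-}V(t,x)$.

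For the hard direction, assume $(q,p)\in D^{1,+}V(t,x)$ is given, and first work on the spatial part. Restricting the defining limsup to the slice $s=t$ produces the spatial superdifferential inequality $V(t,y)\leq V(t,x)+\langle p,y-x\rangle + o(|y-x|)$. By convexity of $V(t,\cdot)$, the convex subdifferential $\partial_x V(t,\cdot)(x)$ is nonempty, and any $p' \in \partial_x V(t,\cdot)(x)$ provides a global affine minorant $V(t,y)\geq V(t,x)+\langle p',y-x\rangle$. Subtracting gives $\langle p'-p,y-x\rangle \leq o(|y-x|)$ for all $y$ near $x$, and testing with $y=x+\varepsilon(p'-p)$ as $\varepsilon\to 0^+$ forces $p'=p$. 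Since $p'$ was arbitrary in $\partial_x V(t,\cdot)(x)$, this convex subdifferential is the singleton $\{p\}$, which by a standard convex analysis fact implies $V(t,\cdot)$ is differentiable at $x$ with $\nabla_x V(t,x)=p$.

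The main obstacle I expect is upgrading spatial differentiability to full joint differentiability in $(t,x)$, because the hypothesis provides no convexity in the $t$-direction. Restricting the superdifferential inequality to $y=x$ gives only the one-sided bound $V(s,x)\leq V(t,x)+q(s-t)+o(|s-t|)$, with no convexity tool available to supply a matching lower bound in $t$. My plan is to exploit the joint form of the superdifferential together with the spatial gradient $\nabla_x V(t,x)=p$ just identified: writing the full incremental quotient along $(s,y)\to(t,x)$ and peeling off the now-known spatial linearization $\langle p,y-x\rangle$ reduces the joint limsup to a purely temporal one, forcing $q$ to be simultaneously the right and left $t$-derivative of $s\mapsto V(s,x)$ at $s=t$. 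Combined with the local Lipschitz continuity inherited from spatial convexity (which controls the coupling term $V(s,y)-V(s,x)$), this yields joint differentiability of $V$ at $(t,x)$ with $DV(t,x)=(q,p)$, and the easy direction then delivers the final equality $D^{1,+}V(t,x)=D^{1,-}V(t,x)=\{DV(t,x)\}$.
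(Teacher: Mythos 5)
Your easy direction and the spatial half of the hard direction are correct and standard: restricting the defining limsup to the slice $s=t$ shows $p$ is a (Fr\'echet) supergradient of the convex function $V(t,\cdot)$ at $x$; comparing with an arbitrary convex subgradient $p'$ and testing along $y=x+\varepsilon(p'-p)$ collapses $\partial_x V(t,\cdot)(x)$ to the singleton $\{p\}$; and a convex function with singleton subdifferential is differentiable there. For calibration, the paper supplies no argument of its own to compare against --- it defers entirely to \cite[p.~9]{bressan2011viscosity} and \cite[pp.~51, 58]{cannarsa2004semiconcave}, where the analogous statements are proved for functions that are (semi)concave or convex \emph{jointly} in all variables, not merely in $x$.

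The temporal step, which you correctly identified as the main obstacle, is a genuine gap, and your proposed resolution cannot work under the stated hypotheses: membership in $D^{1,+}_{t,x}V(t,x)$ only yields the one-sided bound $V(s,x)-V(t,x)-q(s-t)\le o(|s-t|)$ after peeling off the spatial linearization, and an upper bound alone can never force $q$ to be a two-sided $t$-derivative --- there is no convexity or any other structure in $t$ to supply the matching lower bound, and Lipschitz continuity does not help. Concretely, take $V(t,x)=-|t-t_0|$ with $t_0\in(0,T)$: for each $t$ the map $V(t,\cdot)$ is constant, hence convex, $V$ is jointly Lipschitz, and $(q,0)\in D^{1,+}_{t,x}V(t_0,x)$ for every $q\in[-1,1]$ since $-|s-t_0|-q(s-t_0)\le 0$, yet $V$ is not differentiable at $(t_0,x)$. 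So the lemma as literally stated (convexity in $x$ only) is false, and no proof strategy can close your final step without an added hypothesis. The statement becomes true --- and your argument completes --- once one assumes the temporal regularity the paper invokes elsewhere, namely $V(\cdot,x)\in C^1$ as in Theorem \ref{theorem:finite}: then the one-sided bound applied for both signs of $s-t$ pins $q=\partial_t V(t,x)$, and the joint lower bound needed for differentiability follows not from Lipschitzness but from the convex-analysis fact that subgradients $p_s\in\partial_x V(s,\cdot)(x)$ converge to $p$ as $s\to t$ when $V(s,\cdot)\to V(t,\cdot)$ pointwise and the limit is differentiable at $x$. Alternatively, the cited texts assume joint semiconcavity/convexity, which makes $D^{1,-}_{t,x}V$ nonempty and reduces everything to the elementary fact that a point where both $D^{1,+}$ and $D^{1,-}$ are nonempty is a point of differentiability.
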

The proof is given in \cite[p9]{bressan2011viscosity} and \cite[p51,p58]{cannarsa2004semiconcave}

\begin{lemma}
\label{convexhull}
	Let V(t,x): $[0,T]\times R^n \to R$, $V(t,\cdot)$ be a convex function and let $x \in A$. Then
	\begin{equation}
		D_{t,x}^{1,-}V(t,x)=CH[D^*V(t,x)]
	\end{equation}
	where CH represents the convex hull notation.
\end{lemma}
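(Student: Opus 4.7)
The plan is to establish the two inclusions $CH[D^*V(t,x)] \subseteq D^{1,-}_{t,x}V(t,x)$ and $D^{1,-}_{t,x}V(t,x) \subseteq CH[D^*V(t,x)]$ separately, leveraging Lemma~\ref{lemma:2 supdiff} which identifies the sub-/super-differential with the ordinary gradient at every differentiability point.

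For the easier direction $\supseteq$, I would first invoke Rademacher's theorem: the locally Lipschitz function $V$ is differentiable almost everywhere, so the differentiability set is dense and $D^*V(t,x)$ is nonempty. At any differentiability point $(t_k,x_k)$ one has $DV(t_k,x_k)\in D^{1,-}_{t,x}V(t_k,x_k)$ by Lemma~\ref{lemma:2 supdiff}. Since the subdifferential has closed graph (an immediate consequence of the $\liminf$ form of its definition in eq.~\ref{subsolution}), taking $(t_k,x_k)\to(t,x)$ along a sequence with $DV(t_k,x_k)$ convergent yields $D^*V(t,x)\subseteq D^{1,-}_{t,x}V(t,x)$. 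Convexity of $V(t,\cdot)$ makes $D^{1,-}_{t,x}V(t,x)$ a convex set, so it must contain the entire convex hull of $D^*V(t,x)$.

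For the opposite inclusion I would argue by contradiction using Hahn--Banach separation. Suppose $(q,p)\in D^{1,-}_{t,x}V(t,x)\setminus CH[D^*V(t,x)]$. The set $D^*V(t,x)$ is compact (gradients are locally bounded by the Lipschitz constant, and closedness follows from its sequential characterization in Definition~\ref{reachable gradients}), hence so is its convex hull. Strict separation produces $(r,v)\in\mathbb{R}\times\mathbb{R}^n$ and $\alpha\in\mathbb{R}$ with
\[
qr+\langle p,v\rangle \;>\;\alpha\;\geq\; q'r+\langle p',v\rangle \quad\text{for every } (q',p')\in D^*V(t,x).
\]
On the other hand, the $\liminf$-form of the subdifferential in eq.~\ref{subsolution} forces
\[
\liminf_{h\to 0^+}\frac{V(t+hr,\,x+hv)-V(t,x)}{h}\;\geq\; qr+\langle p,v\rangle\;>\;\alpha.
\]

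The crux, and what I expect to be the main obstacle, is the reverse estimate $\limsup_{h\to 0^+} h^{-1}\!\left(V(t+hr,x+hv)-V(t,x)\right)\leq\alpha$ phrased through reachable gradients. The idea is that absolute continuity of the scalar function $s\mapsto V(t+sr,x+sv)$ (inherited from local Lipschitzness of $V$) yields
\[
\frac{V(t+hr,\,x+hv)-V(t,x)}{h}\;=\;\frac{1}{h}\int_0^h \langle DV(t+sr,\,x+sv),\,(r,v)\rangle\,ds,
\]
where the integrand is defined for a.e.\ $s\in(0,h)$. At each such $s$ the gradient $DV(t+sr,x+sv)$ can be approximated by perturbing off the line using Rademacher in a neighborhood, then letting $s\to 0^+$, producing cluster values that by construction belong to $D^*V(t,x)$ and hence satisfy $\langle\cdot,(r,v)\rangle\leq\alpha$. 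Passing $h\to 0^+$ and using compactness of the gradient cluster set gives $\limsup\leq\alpha$, contradicting the strict inequality above. The delicate step — and where convexity of $V(t,\cdot)$ is essential — is ruling out ``escape'' of gradients along the line to values outside $D^*V(t,x)$; convexity supplies the monotonicity of one-sided difference quotients needed to confine the cluster gradients inside the reachable set, and it is precisely here that I expect the technical work to concentrate.
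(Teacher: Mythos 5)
Your overall two-inclusion architecture is sound and in fact reconstructs the argument behind the paper's proof, which is not self-contained: the paper simply invokes \cite[p.~59]{cannarsa2004semiconcave} (the identity $D^{+}u=CH[D^{*}u]$ for semiconcave $u$, their Theorem~3.3.6) applied to $u=-V$. However, your proposal locates the role of convexity exactly backwards, and one step is false as written.

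The false step is the claim that the subdifferential ``has closed graph (an immediate consequence of the $\liminf$ form of its definition).'' For a general locally Lipschitz function this fails: take $V(x)=-|x|$; then $-1=DV(1/k)\in D^{-}V(1/k)$ and $1/k\to 0$, yet $D^{-}V(0)=\emptyset$, so the limiting gradient escapes the subdifferential. The $\liminf$ definition is purely local and is not preserved under limits of base points; closedness of the graph --- hence the inclusion $D^{*}V(t,x)\subseteq D^{1,-}_{t,x}V(t,x)$ --- is precisely where convexity of $V(t,\cdot)$ must enter: for a convex function, $p$ being a spatial subgradient is equivalent to the global inequality $V(t,y)\ge V(t,x)+\langle p,y-x\rangle$ for all $y$, and global inequalities do survive passage to the limit along differentiability points (the time component $q$ additionally needs the $C^{1}$-in-$t$ regularity that the paper only imposes later, in Theorem~\ref{theorem:finite}; the lemma as stated is silent on it, and your proof inherits that gap). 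Conversely, the second half, which you flag as the ``delicate step \dots\ where convexity of $V(t,\cdot)$ is essential,'' needs no convexity at all: cluster values of $DV$ along the almost-everywhere perturbed segment lie in $D^{*}V(t,x)$ by the very definition of reachable gradient (Definition~\ref{reachable gradients}), so no ``escape'' is possible and no monotonicity of difference quotients is required, while boundedness comes from the Lipschitz constant alone. Indeed, the inclusion $D^{1,-}_{t,x}V(t,x)\subseteq CH[D^{*}V(t,x)]$ obtained by separation plus integration of gradients is valid for every locally Lipschitz function (it identifies $CH[D^{*}V]$ with the Clarke generalized gradient). Your proof is therefore repairable, but only after swapping your assessment of which inclusion is routine and which one genuinely uses convexity.
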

\begin{proof}
	The proof could be referred to \cite[59]{cannarsa2004semiconcave} when taking into $-V(x)$ as semiconvex
\end{proof}
\begin{remark}
	it is easy to see that $D^*V(x)$ is a compact set, and it is bounded since we are takeing $V(t,\cdot)$ Lipschitz. These propositions shows that, not only $D^*V(x) \subset  D^-V(x)$ but also that all reachable gradients are boundary points of $D^-V(x)$
	\end{remark}
\begin{theorem}
\label{theorem:finite}
	Given a system in eq.\ref{affine system}, considering the time limited cost function in eq.\ref{cost_function} and the HJB equation in eq.\ref{HJB_finite}. $V(t,x)$ is the viscosity solution of the above HJB, if
	\begin{itemize}
	  \item  $V(t,x)$ satisfies the eq.\ref{HJB_finite} almost everywhere
		\item $V(t,\cdot)$ is locally liptschitz convex and $H(x,u,\cdot)$ is locally liptschitz concave
		\item $V(\cdot,x) \in C^1$ 
	\end{itemize}
\end{theorem}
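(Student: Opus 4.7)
The plan is to verify the two viscosity inequalities of the sub/superdifferential characterization directly. The hypotheses fit together nicely: local Lipschitz convexity of $V(t,\cdot)$ combined with $V(\cdot,x)\in C^{1}$ forces $V$ to be differentiable almost everywhere on $[0,T]\times\mathbb{R}^{n}$; on this full-measure set the HJB equation holds pointwise by hypothesis; and a standard fact from convex analysis is that the gradient map of a convex function is continuous at each of its differentiability points. My first step is to upgrade HJB from ``holds almost everywhere'' to ``holds at every differentiability point of $V$'': given any differentiability point $(t_{0},x_{0})$, I would approximate by ``good'' points in the full-measure set where $V$ is differentiable and HJB holds exactly, then pass to the limit using continuity of the gradient of $V$ at differentiability points and continuity of $H$.

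With this upgrade in hand, the two viscosity inequalities follow from Lemma \ref{lemma:2 supdiff} and Lemma \ref{convexhull}. For the subsolution side, take $(q,p)\in D^{+}V(t_{0},x_{0})$; Lemma \ref{lemma:2 supdiff} forces $V$ to be differentiable at $(t_{0},x_{0})$ with $(q,p)=(V_{t},V_{x})$ there, so the extended HJB gives $q+H(x_{0},p)=0\leq 0$ at once. For the supersolution side, take $(q,p)\in D^{-}V(t_{0},x_{0})$; Lemma \ref{convexhull} writes $(q,p)=\sum_{i}\lambda_{i}(q_{i},p_{i})$ as a convex combination of reachable gradients $(q_{i},p_{i})\in D^{*}V(t_{0},x_{0})$. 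Each reachable gradient is a limit of true gradients at differentiability points converging to $(t_{0},x_{0})$, so the extended HJB applied at those points together with continuity of $H$ yields $q_{i}+H(x_{0},p_{i})=0$ for each $i$. Concavity of $H$ in its gradient slot (Lemma \ref{concavehamiltonian}) then closes the argument via Jensen's inequality:
$q+H(x_{0},p)=\sum_{i}\lambda_{i}q_{i}+H\bigl(x_{0},\sum_{i}\lambda_{i}p_{i}\bigr)\geq\sum_{i}\lambda_{i}\bigl(q_{i}+H(x_{0},p_{i})\bigr)=0.$

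The main obstacle I anticipate is exactly the extension step in the first paragraph: justifying that HJB holds at \emph{every} differentiability point rather than merely almost everywhere. This relies on the classical result that the gradient map of a convex function is continuous at each of its differentiability points, which, combined with the $C^{1}$ regularity in $t$ and continuity of $H$, permits a clean limiting argument through HJB-satisfying good points. Everything downstream---invoking Lemma \ref{lemma:2 supdiff} on the subsolution side and Lemma \ref{convexhull} plus concavity of $H$ on the supersolution side---is then essentially bookkeeping, with the concavity hypothesis on $H$ earning its keep precisely in the final Jensen step.
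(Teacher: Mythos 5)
Your proposal is correct and follows essentially the same route as the paper's proof: Lemma \ref{lemma:2 supdiff} handles the $D^{+}V$ side by forcing differentiability (so the inequality holds as an equality), while Lemma \ref{convexhull} plus concavity of $H$ from Lemma \ref{concavehamiltonian} gives $q+H(x,p)\geq 0$ on the $D^{-}V$ side. The one place you go beyond the paper is the extension step upgrading the equation from almost everywhere to every differentiability point (and hence to all reachable gradients), which the paper dispatches with the bare assertion that it ``follows directly from the definition of $D^*V$, the condition $V(\cdot,x)\in C^1$ and continuity of $H$''---your explicit limiting argument through HJB-satisfying points fills in exactly the detail the paper leaves implicit.
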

\begin{proof}
	First we observe that for $\forall (t,x) \in (0,T)\times R^n$
	\begin{equation}
 q + H(x,u,p) = 0,  \quad  \forall [q^T,p^T]^T \in D^*V(t,x)
	\end{equation}
	 This follows directly from the definition of $D^*V(x)$, the conditions $V(\cdot,x) \in C^1$ and continuity of H. Since in Lemma \ref{convexhull}, $D^-V(t,x)$ is the convex hull of $D^*V(t,x)$, and $H(x,u,\cdot)$ is concave guaranteed from lemma \ref{concavehamiltonian} , it is easy to see that 
	\begin{align}
q + H(p, x,u) & \geq0 & \forall (t,x) \in (0,T) \times R^n, [q^T p^T]^T \in D^-V(t,x)
	\end{align}
The next step is to check the 
	\begin{align}
		q + H(p, x,u) & \leq0 & \forall (t,x) \in (0,T) \times R^n, [q^T p^T]^T \in D^+V(t,x) 
	\end{align}
If there exists some $[q^T p^T]^T \in D^+V(t,x)$, according to the lemma \ref{lemma:2 supdiff}, V(t,x) should be differentiable and $[q^T p^T]^T = D^V(t,x)$. Thus the inequality holds as an equality. Hence $V(t,x)$ is the viscosity solution for eq.\ref{HJB}, and uniqueness is ensured.
\end{proof}
\begin{corollary}
	Given a system in eq.\ref{affine system} and consider infinite horizon optimal control problem:
\begin{align}
	V(x) &= \inf_{u \in U} \int_{0}^{\infty} \left(x^TQx +u^TRu\right) dt 
\end{align}
where the HJB equation becomes 
\begin{align}
\label{infinite_HJB}
		\min_{u \in U} H(V_x, x, u) = 
		\min_{u \in U}\left\{ V_x \dot{x} + x^TQx +u^TRu
	\right\}=0
\end{align}

	Let $V(x)$ be a locally liptschitz convex function satisfying eq.\ref{infinite_HJB} almost everywhere. If $H(\cdot,x,u)$ is concave, then V(x) is the viscosity solution of eq.\ref{HJB}, thus $V(x)$ is the unique optimal cost function for eq.\ref{infinite_HJB}
\end{corollary}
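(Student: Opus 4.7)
The plan is to mimic the proof of Theorem \ref{theorem:finite} verbatim, stripped of the $t$-dependence, because the infinite-horizon HJB in eq.\ref{infinite_HJB} is structurally just the stationary analogue (no $V_t$, no terminal datum). Concretely, I would reuse Lemma \ref{concavehamiltonian} to know that $H(\cdot,x,u)$ is concave in its first argument, and apply the $x$-only versions of Lemma \ref{lemma:2 supdiff} and Lemma \ref{convexhull} (both are stated for convex $V$ and both specialize trivially when the $t$-slot is absent, so no new work is required).

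The two viscosity inequalities would then be verified as follows. First, from the hypothesis that $V$ satisfies eq.\ref{infinite_HJB} almost everywhere together with the continuity of $H$, every reachable gradient $p\in D^*V(x)$ satisfies $H(p,x,u)=0$; this is the pointwise-limit step that worked in Theorem \ref{theorem:finite}. Next, by Lemma \ref{convexhull} in the stationary form, $D^-V(x)=\mathrm{CH}[D^*V(x)]$, so an arbitrary subdifferential $p$ is a convex combination $p=\sum_i \lambda_i p_i$ with $p_i\in D^*V(x)$. Because $H(\cdot,x,u)$ is concave, Jensen's inequality gives
\begin{equation*}
H(p,x,u) \;\geq\; \sum_i \lambda_i H(p_i,x,u) \;=\; 0,
\end{equation*}
which is the supersolution condition. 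For the subsolution condition, Lemma \ref{lemma:2 supdiff} forces $V$ to be differentiable at any $x$ where $D^+V(x)\neq\emptyset$, so the only element of $D^+V(x)$ is $DV(x)$ itself, which is simultaneously a reachable gradient, and therefore $H(DV(x),x,u)=0\leq 0$.

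Having both viscosity inequalities and the continuity of $V$, the equivalence between the definitions in Section \ref{sec:viscosity introduce} delivers the conclusion that $V$ is a viscosity solution; uniqueness of viscosity solutions then identifies $V$ with the optimal cost function for eq.\ref{infinite_HJB}.

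The main obstacle I anticipate is not the algebra but the boundary condition: in the infinite-horizon setting there is no terminal datum $V(T,\cdot)=g$, and the standard uniqueness theorem for viscosity solutions typically requires some growth / anchoring condition (e.g.\ $V(0)=0$ together with a stabilizability or properness assumption on $Q,R$) to rule out trivial non-uniqueness coming from the lack of a Cauchy datum. I would therefore want to make explicit the anchoring $V(0)=0$ noted in the earlier remark, and invoke the positive definiteness of $Q,R$ to guarantee that the convex viscosity solution picked out this way is indeed the value function rather than another convex function satisfying the PDE almost everywhere. Once that is in place, the rest of the argument is a direct transcription of the finite-horizon proof.
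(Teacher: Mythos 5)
Your proposal follows essentially the same route as the paper's own proof: specialize $D^*V(x)$ and $D^-V(x)$ as the stationary cases of their time-dependent counterparts, use the a.e.\ satisfaction of eq.\ref{infinite_HJB} plus continuity of $H$ to get $H(p,x,u)=0$ for every reachable gradient, combine $D^-V(x)=\mathrm{CH}[D^*V(x)]$ (Lemma \ref{convexhull}) with concavity of $H$ for the supersolution inequality (your Jensen computation is exactly the step the paper leaves as ``it is easy to see''), and invoke Lemma \ref{lemma:2 supdiff} to reduce the subsolution inequality to an equality at points of differentiability. Your closing caveat about the absence of a terminal datum and the need for an anchoring condition such as $V(0)=0$ to justify the uniqueness claim is added rigor on a point the paper's proof silently skips, not a divergence in method.
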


\begin{proof}
	 We consider the $D^*V(x)$ as the special case of $D^*V(t,x)$, and $D^-V(x)$ is the special case of $D_{t,x}^{1,-}V(t,x)$. It could be seen directly from the definition of $D^*V(x)$ and the continuity of H, that the HJB follows when $\forall p \in D^*V(x)$. Since $D^-V(x)$ is the convex hull of $D^*V(x)$, and $H(\cdot,x,u)$ is concave, it is easy to see that 
	\begin{align}
H(p, x,u) & \geq0 & \forall x \in R^n, p \in D^-V(x) 
	\end{align}
For the super differentials  
	\begin{align}
		H(p, x,u) & \leq0 & \forall x \in R^n, p \in D^+V(x) 
	\end{align}
If there exists some $p \in D^+V(x)$, similarly like theorem \ref{theorem:finite}, $p = DV(x)$. Thus inequality\ref{subsolution} hold as an equality. Hence $V(x)$ is the viscosity solution for eq.\ref{HJB}.\end{proof}

\section{PINN-based numerical viscosity solution of HJB}
\label{sec: numerical method}
This section introduced two methods to ensure the convex structure of solution based on PINN method

\subsection{Method 1: Addding convex term in loss function}
The sufficient and necessary condition of the function $V(t,.)$ being convex is
\begin{align}
	Hess(V_x)=\begin{bmatrix}
	\frac{\partial^2V(t,x)}{\partial x_1^2} & \cdots & \frac{\partial^2V(t,x)}{\partial x_1x_n}\\
	\vdots & \ddots & \vdots\\
	\frac{\partial^2V(t,x)}{\partial x_nx_1} & \cdots & \frac{\partial^2V(t,x)}{\partial x_n^2}
	\end{bmatrix}>0 \\
	\iff \left|Hess(V_x)_{i,i}\right|>0
\end{align}
where $\left|Hess(V_x)_{i,i}\right|$ is the ith order principal minor determinant.

Hence, the following loss function is thus proposed during training
\begin{align}
	Loss = \frac{1}{N_i} \Sigma_{j=1}^{N_i}\left|
	V_{tj}+H(x_j,V_{xj})
	\right|^2
	+\frac{1}{N_b} \Sigma_{j=1}^{N_b}
	\left|V(tj,xj)-V_{NN}(tj,xj)\right|^2 + 
	\frac{1}{N_H} \Sigma_{j=1}^{N_H}\left|\min (0,\left|Hess(V_x)_{j,j}
	\right|)\right|^2
\end{align}
where $N_i$ is the number of inner point, $N_b$ is the number of boundary point, $N_H$ is the number of point calculating Hessin matrix.
It is worth mentioning that the Tanh activation function would be used in the neural network, hence the smoothness of the solution is guaranteed. Once we can check the positiveness of Hessian matrix of $V_x$, then the uniqueness of solution is satisfied according to theorem \ref{theorem:finite}.

\subsection{Method 2: Design of input convex neural network}
Another method is to construct a partially input convex neural network directly to ensure the convexity of solution in theorem\ref{theorem:finite}.

To build a first full input convex neural network, it is necessary to review some basic properties of convex functions

\begin{proposition}
\label{proposition }
	If $f(x)$ and $g(x)$ are both convex functions
		\begin{itemize}
	
		\item  $m(x)=\max \left\{ f(x), g(x)\right\} $ and $h(x)=w_1f(x)+w_2g(x)$ are both convex functions $\forall w_1>0,w_2>0$
		\item Affine transform doesn't change convexity, e.g. $h(p)=f(Ap+b)$ is also convex function
		\item $f(g(x))$ is also convex given that $g(x)$ is monotonically increasing
     \end{itemize}
\end{proposition}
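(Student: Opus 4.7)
The plan is to verify each of the three items by a direct appeal to the definition of convexity, namely $h(\lambda x_1 + (1-\lambda) x_2) \leq \lambda h(x_1) + (1-\lambda) h(x_2)$ for all $x_1, x_2$ in the domain and $\lambda \in [0,1]$. Each item is a classical textbook fact, so the proof is largely a short chain of elementary inequalities; the only subtle point is the composition rule, where I would need to sharpen the stated hypothesis.

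For the first item, I would fix $x_1, x_2, \lambda$ and expand $m(\lambda x_1 + (1-\lambda) x_2) = \max\{f(\lambda x_1 + (1-\lambda)x_2),\, g(\lambda x_1 + (1-\lambda)x_2)\}$. Applying convexity of $f$ and $g$ inside each branch of the max, followed by the elementary fact $\max\{a+c, b+d\} \leq \max\{a,b\} + \max\{c,d\}$, yields the desired inequality. For the weighted sum $h = w_1 f + w_2 g$, I would simply multiply the convexity inequality for $f$ by $w_1 > 0$, the one for $g$ by $w_2 > 0$, and add. For the affine transform, the key observation is that $p \mapsto Ap + b$ preserves convex combinations: $A(\lambda p_1 + (1-\lambda) p_2) + b = \lambda(Ap_1 + b) + (1-\lambda)(Ap_2 + b)$, so applying $f$ and invoking its convexity finishes the argument in one line.

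The composition $f \circ g$ is the one place that requires care, and I expect it to be the main obstacle — not computationally, but in getting the hypothesis right. As usually stated (see \cite{boyd2004convex}), the correct conditions are that $g$ is convex and $f$ is convex and \emph{nondecreasing}; the proof then has two steps: use convexity of $g$ to bound $g(\lambda x_1 + (1-\lambda) x_2) \leq \lambda g(x_1) + (1-\lambda) g(x_2)$, then apply the nondecreasing $f$ to both sides to preserve the inequality, and finally use convexity of $f$ on the right to produce $\lambda f(g(x_1)) + (1-\lambda) f(g(x_2))$. In writing the proof I would clarify that the monotonicity belongs to the outer function $f$, since that is what is needed to pass inequalities through $f$ in the first step; this is the only place where the statement as written deserves a small adjustment.
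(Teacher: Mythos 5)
Your proof is correct, but there is nothing in the paper to compare it against: the paper states this proposition without any proof, offering it as a review of textbook facts (the standard reference here is \cite{boyd2004convex}, which the paper cites in Lemma 1 for a related point). So your elementary verification-by-definition arguments for the max, the positive weighted sum, and the affine precomposition supply exactly what the paper omits, and all three are sound — including the inequality $\max\{a+c,\,b+d\}\leq\max\{a,b\}+\max\{c,d\}$ you use for the first item, though the even shorter route of bounding each branch by the right-hand side directly would also do.

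More importantly, your ``small adjustment'' to the third item is not cosmetic: the statement as printed in the paper is false. With both $f$ and $g$ convex and $g$ monotonically increasing, take $f(x)=e^{-x}$ and $g(x)=x^{3}$ on $x>0$; then $(f\circ g)(x)=e^{-x^{3}}$ has second derivative $3x\left(3x^{3}-2\right)e^{-x^{3}}$, which is negative for small $x>0$, so $f\circ g$ is not convex. As you say, the monotonicity hypothesis must sit on the \emph{outer} function: $g$ convex and $f$ convex nondecreasing gives $f\bigl(g(\lambda x_{1}+(1-\lambda)x_{2})\bigr)\leq f\bigl(\lambda g(x_{1})+(1-\lambda)g(x_{2})\bigr)\leq\lambda f(g(x_{1}))+(1-\lambda)f(g(x_{2}))$, and your two-step argument is the standard one. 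Note that the paper's downstream construction is unaffected by this correction, since the outer functions actually used there — $\sigma(x)=\max(0,x)$ and positive-weight sums — are convex and nondecreasing, so the corrected composition rule covers them.
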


From the proposition above, it is easy to prove that the following neural network is a convex function

\begin{align}
	y = W_1\sigma(W_0x+b_0) + b_1\\
	where \left\{\left. w_i>0\right|\forall w_i \in W_1 \nonumber
\right\},\sigma(x)=\max(0,x) \nonumber
\end{align}
It is clear that this single neural network represents the sum of a bunch of convex functions. To improve the approximation ability, we add a feed forward channel to the output, so that the single layer NN become
\begin{align}
\label{ConvexNN}
	y = W_1\sigma(W_0x+b_0) + b_1 + fx\\
	where\left\{\left. w_i>0\right| \forall w_i \in W_1 \nonumber
\right\}\end{align}
The structure is shown in Fig.\ref{fig:NN} and the following theorem holds

\begin{figure}[!h]
\centering
\label{fig:NN}
\includegraphics[width=10cm, height = 8cm]{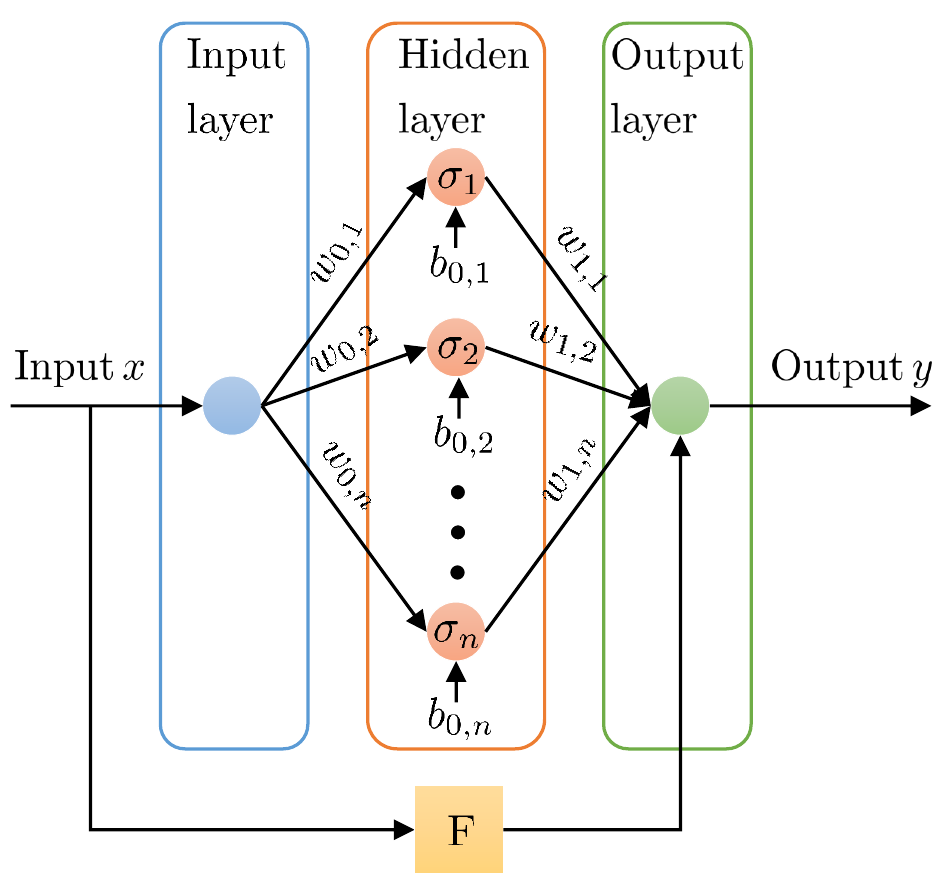}
\caption{Convex Neural Network}
\end{figure}

\begin{theorem}
	Given any convex function $f(x)$, and $\forall C>0$,$\exists y_{NN}(x)$ in the form eq.\ref{ConvexNN}, such that
	\begin{align}
	\label{approxmationtheorem}
		\| f(x)-y_{NN}(x) \|<C
	\end{align}
	where $\| f(x)-y_{NN}(x) \|$ is the $L_\infty$ norm.
\end{theorem}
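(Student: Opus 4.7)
The claim is a universal approximation result for the single-hidden-layer input-convex ReLU architecture, restricted to the class of convex target functions. Since a continuous convex function on all of $\mathbb{R}^n$ generally cannot be uniformly approximated by a finite network, I would interpret $\|\cdot\|$ as the $L^\infty$ norm over a fixed compact set $K \subset \mathbb{R}^n$ (the region of interest for the HJB problem) and work there throughout.

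The strategy is two-stage. First I would replace $f$ by a piecewise linear convex approximation built from supporting hyperplanes: pick an $\epsilon$-net $\{x_i\}_{i=1}^N \subset K$, choose a subgradient $g_i \in \partial f(x_i)$ at each sample, and set $\hat f(x) = \max_{i=1,\ldots,N}\bigl(f(x_i) + g_i^\top (x-x_i)\bigr)$. Convexity of $f$ forces $\hat f \le f$, and the local Lipschitz continuity of $f$ on the compact $K$ yields $\sup_{x\in K}|f(x)-\hat f(x)| < C/2$ once the mesh of the $\epsilon$-net is fine enough. This is the standard supporting-hyperplane envelope argument and is the easy part.

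Second, I would realize $\hat f$ (or an arbitrarily close uniform approximation) in the form \ref{ConvexNN}, i.e.\ as $Fx + b_1 + \sum_j w_j\,\sigma(a_j^\top x + c_j)$ with $w_j > 0$. In one dimension this is essentially immediate: any piecewise linear convex $\hat f$ has non-decreasing slope, so the slope jumps $\Delta s_j \ge 0$ give the exact representation $\hat f(x) = \hat f(x_0) + s_0 (x-x_0) + \sum_j \Delta s_j\,\sigma(x - t_j)$, which matches the architecture with the skip channel $Fx+b_1$ absorbing the affine base. In higher dimensions I would start from the base identity $\max(\ell_1,\ell_2) = \ell_1 + \sigma(\ell_2-\ell_1)$ and then invoke a density argument: the positive cone generated by the ridge atoms $\{x\mapsto \sigma(a^\top x + c)\}$, modulo the affine correction supplied by the skip, is dense in the cone of continuous convex functions on $K$. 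One can prove this constructively, by slicing $\hat f$ along a dense set of directions $\omega \in S^{n-1}$ and applying the 1D hinge decomposition in each slice, or abstractly, via a Hahn--Banach separation applied to the closed convex cone of attainable network outputs.

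The main obstacle is the multidimensional part of the second stage: naive iteration of $\max(\ell_1,\ell_2) = \ell_1 + \sigma(\ell_2-\ell_1)$ to more than two affine pieces calls for a deep input-convex network, whereas \ref{ConvexNN} has only a single hidden layer. This forces me to trade exact representation for approximation, and to control the uniform error introduced by the ridge-atom covering of the positive-ReLU cone. Quantifying how the number of ReLU units must grow with $C$ and with the modulus of continuity of $f$ on $K$ is where the technical weight of the proof sits; the rest---compactification, subgradient sampling, the 1D slope decomposition, and the role of $Fx+b_1$---is routine.
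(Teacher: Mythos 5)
In one dimension your argument is correct and is essentially the paper's own proof. The paper rewrites eq.\ref{ConvexNN} as a sum of two-piece-linear max functions $\sum_i \max\{m_1 x+n_1,\, m_2 x+n_2\}$, observes that any convex piecewise linear function is obtained by successively adding hinges $\max\{m_j x+n_j,\,0\}$ at the kink points, and then cites density of convex piecewise linear functions among continuous convex functions; your slope-jump identity $\hat f(x)=\hat f(x_0)+s_0(x-x_0)+\sum_j \Delta s_j\,\sigma(x-t_j)$ is the same hinge decomposition, and your supporting-hyperplane envelope simply makes the cited density step constructive. Your restriction of the $L_\infty$ norm to a compact set $K$ is in fact a necessary repair that the paper omits: a finite network of this form is piecewise linear with finitely many pieces, so its uniform distance to a strongly convex $f$ over all of $\mathbb{R}$ is infinite, and the theorem as literally stated is vacuous without compactness.

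The genuine gap is in your multidimensional Stage 2: the density claim you invoke there is false for $n\ge 2$. The attainable outputs are $\ell(x)+\sum_j w_j\,\sigma(a_j^\top x+c_j)$ with $w_j>0$, whose distributional Hessian is a nonnegative sum of rank-one measures, each carried with \emph{constant} density along an entire hyperplane. Take $f(x)=\max(0,x_1,x_2)$ on the unit ball of $\mathbb{R}^2$: its Hessian measure is carried by three rays that terminate at the origin. Uniform convergence of convex functions forces weak convergence of the Hessian measures, and since all $w_j>0$ there is no cancellation, so any mass an approximant places along (say) the lower vertical ray drags equal-density mass along the continuation of that line into a region where $f$ is affine and the limit measure must vanish; hence such $f$ lies outside the uniform closure of the one-hidden-layer cone. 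This is precisely why deep input-convex architectures use several layers to realize $\max$ of more than two affine functions via iterated $\max(\ell_1,\ell_2)=\ell_1+\sigma(\ell_2-\ell_1)$: the obstruction is structural, not a rate question you can settle by counting ReLU units as you propose. Your slicing idea fails because a convex function on $\mathbb{R}^n$ is not a sum of ridge functions built from its one-dimensional slices, and a Hahn--Banach separation argument would produce exactly the obstruction above rather than density. In fairness, the paper's own proof is written entirely in scalar notation and silently suffers the same limitation, so as a proof of the one-dimensional statement your proposal is sound and coincides with the paper's; as a proof of the theorem for vector inputs, which the paper's simulations require, neither argument closes this hole.
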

	
\begin{proof}
	the eq.\ref{ConvexNN} could be re-write in the following form
	\begin{align}
	\label{NNOneChannel}
		y_{NN}(x)= \Sigma_{i=1}^{n} w_{1,i} \{ \max \{ w_{0,i}x +b_{0,i}, \ 0\}+\frac{b_{1,i}}{w_{1,i}}  +\frac{f_{i}}{w_{1,i}}x  \}
	\end{align}
	where i represents the index of ith neural in the hidden layer, $w_{0,i}, w_{1,i}$ are the weight vector corresponds to the relative hidden neuron, $b_{0,i}$ is the corresponding bias. Further more $b_{1,i}, f_{i}$ are the parameters that satisfy
	\begin{align}
		\Sigma_{i=1}^{n} b_{1,i} = b_1, \Sigma_{i=1}^{n} f_{i} = f \nonumber
	\end{align}
	it is easy to see that eq.\ref{NNOneChannel} represents the effect of one channel through a single hidden neuron, and it is equivalent to the following summation of convex two piece linear function
	\begin{align}
		\label{NNOneChannel}
		y_{NN}(x)&= \Sigma_{i=1}^{n} w_{1,i} \{ \max \{ \left(w_{0,i} + \frac{f_{i}}{w_{1,i}}\right)x +b_{0,i} +\frac{b_{1,i}}{w_{1,i}}, \ 
		\frac{f_{i}}{w_{1,i}}x + 
		\frac{b_{1,i}}{w_{1,i}}\}\}	\\
		\label{TwoPieceLinear}
		 &= \Sigma_{i=1}^{n}   \max \{ m_1x +n_{1}, \ 
		m_2x +n_{2}\}
		\end{align}
		it is obvious that any convex piece-wise linear function $h(x)$ could represented by the sum of two convex piece linear function, e.g. eq.\ref{TwoPieceLinear}. A constructive proof could start from the outermost two piece linear function, plus the following function 
		\begin{align}
h_j(x)= \max \{ m_jx +n_{j}, 0\}
		\end{align} 
		when comes to the nonsmooth point. Hence $y_{NN}(x)$ could represents any piecewise linear function. It is also known that convex piecewise linear function is dense in continuous convex function space\cite{cox1971algorithm,gavrilovic1975optimal}, hence eq.\ref{approxmationtheorem} is prooved.
		\end{proof} 


\begin{remark}
	it is worth to point out that, the reason that we don't use the maximum of a bunch of linear functions directly is that, when this network is used for solving PDEs, which involves lots of computation of derivatives, it would cause oscillation during training, and it is almost impossible to converge to the global optimal. On the contrary, when approximating a convex function directly,  this maximum of a bunch linear functions shows good performance.
\end{remark}

Next phase is to build the partially convex network. A partial convex neural network is proposed as Fig.\ref{fig:partialNN} shows, where the input output relation is given as

\begin{align}
	y &= W_2\sigma(W_0x + W_1{\mathcal F}_{{NN}_2}(\hat y) + b_0) + b_1\\
	\hat y &= {\mathcal F}_{{NN}_1}(t) 
\end{align}
where $\left\{\left. w_i>0\right|\forall w_i \in W_2 \nonumber
\right\},\sigma(x)=\max(0,x)$, and ${\mathcal F}_{{NN}_1},{\mathcal F}_{{NN}_2}$ are normal fully connected networks.

On the one hand, for a given $t$, $V(t,.)$ could be seen as a convex neural network in Fig.\ref{fig:NN} with a special bias decided by input t, hence $V(t,.)$ is a convex neural network. On the other hand, for a given $x$, $W_0x$ could be regarded as a special bias for the network of $V(t,.)$. It is worth mentioning that, although $W_2$ should be element wise positive, there exists a network ${\mathcal F}_{{NN}_2}$ with certain layers, so that the input $\hat y$ approximate output $y$. Hence given a normal fully connected network ${\mathcal F}_{{NN}_1}$ with certain layers, the representability from input $t$ to output $y$ is not worth than a normal fully connected neural network. 

\begin{figure}[!h]
\centering
\label{fig:partialNN}
\includegraphics[width = 15cm]{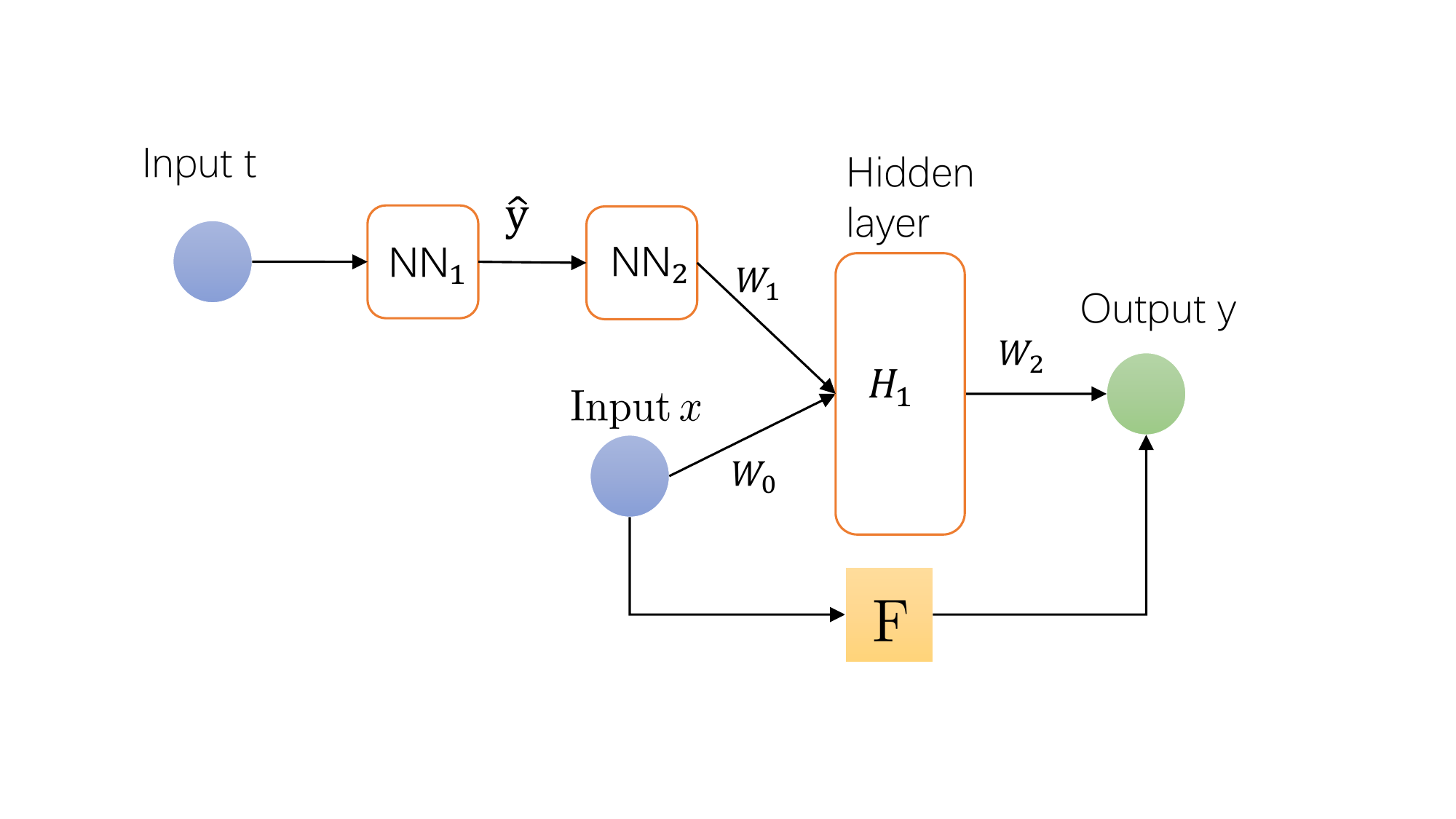}
\caption{Partial Convex Neural Network}
\end{figure}

\subsection{Strip training method}
A key issue during the training of PINN, is how to find global minimum. For HJB-based problem, we found a efficient training method that could avoid local minimum theoritically, we call it strip training method, which is similar to the curriculum learning in reinforcement learning.

For a given system
\begin{align}
		\dot{x} &= a(x)x+b(x)u, \quad u(t)\in U, \quad t\in \left[0,T\right]
\end{align}
with the cost function
\begin{align}
	V(t,x) &= \inf_{u \in U} \int_{0}^{T} \left(x^TQx +u^TRu\right) dt +\psi(x(T))=\inf_{u \in U} \int_{0}^{T} L(x,u) dt +\psi(x(T))
\end{align}
where the cost function $V(T,x)$, at end time T, is known. The corresponding HJB is given as

\begin{align}
		V_t + \inf_{u \in U} \{ L(x,u) + V_x\dot{x}\} = V_t + H(x,u,V_x) = 0 
\end{align}

We have the strip training algorithm which is shown in algorithm \ref{algorithm: strip training}. It worth mentioning that, the method 2 has a better performance than method 1 with less computation time in practice, so this strip training method is studied based on method 2. Although method 1 could also be realized in similar way. The sketch figure is shown in Fig.\ref{fig:training_sketch}
\begin{figure}[!h]
\centering
\label{fig:training_sketch}
\includegraphics[width = 15cm]{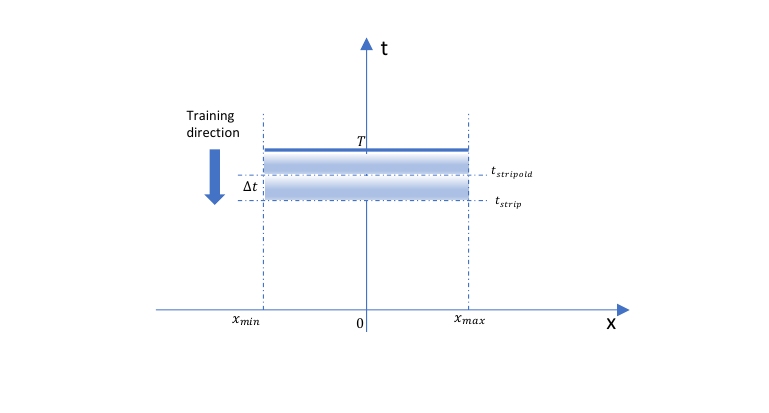}
\caption{Strip training procedure}
\end{figure}

The main idea of strip training is, the optimal function $V(t,x)$ is continuous, and if we start from the boundary, and take $\Delta t$ small enough, the shape of neural network could approximate the true solution as close as we want. Then the training step is first train the boundary point, then under the shape of boundary points (realized by large $\lambda$), add the information of HJB equation to the loss (in step 2.2). Then we treat the trained area as a known area, so after adding a new strip to the training area (in step 3), the large $\lambda$ is used to maintain the shape of known area. Repeat the procedure till the time reaches to 0.

\begin{algorithm}[!htbp]
\renewcommand{\algorithmicensure}{\textbf{Step}}
\caption{Strip training method}
\label{algorithm: strip training}

\begin{algorithmic}
\ENSURE \textbf{1:} Boundary Training
\STATE \textbf{step 1.1}: Set sample region $(t,x) \ in [0,T]\times [x_{min}, x_{max}]$ and sample number $N$
\STATE \textbf{step 1.2}: Suppose that $V(t_i,x_i) = V(T,x_i), i = 0,1,\cdots,N$
\STATE \textbf{step 1.3}:
\WHILE{$\| NN_{PINN}(t_i,x_i) - V(t_i,x_i)\|$ $<$ $l_b$} 
	\STATE  Train $NN$ with Adam optimizer   
\ENDWHILE
\end{algorithmic}

\begin{algorithmic}
\ENSURE \textbf{2:} Boundary with initial strip training
\STATE \textbf{step 2.1}: Set sample region $(t,x) \ in [t_{ini},T]\times [x_{min}, x_{max}]$ and sample number $N$
\STATE \textbf{step 2.2}:  Set the boundary points $V(t_{ib},x_{ib}) = V(T_{ib},x_{ib})$, and inner points $V(t_{in},x_{in})$, where $(t_{in},x_{in}) \in [t_{ini},T]\times [x_{min}, x_{max}]$, and  $loss = \lambda\| NN(t_{ib},x_{ib}) - V(t_{ib},x_{ib})\| + \|\frac{\partial NN(t,x)}{\partial t} + H(x,u,\frac{\partial NN(t,x)}{\partial x}) \|$ 
\STATE \textbf{step 2.3}:
\WHILE{ $ loss < loss_{th1}$} 
	\STATE  Train $NN(t,x)$ with a large $\lambda$ and Adam optimizer.	
\ENDWHILE

\STATE \textbf{step 2.4}:
\WHILE{ $ loss < loss_{th2}$} 
	\STATE Set $\lambda = 1$. Train $NN(t,x)$ Adam optimizer.
	\ENDWHILE
\end{algorithmic}

\begin{algorithmic}[!H]
\ENSURE \textbf{3:} Strip training
\STATE \textbf{step 3.1}: Set  $t_{stripold} = t_{ini}$ and $t_{strip} = t_{stripold} - \Delta t$.
 The sample region is $(t,x) \in [t_{strip},T]\times [x_{min}, x_{max}]$ and sample number $N$.
 \STATE \textbf{step 3.2}:  Set the boundary points $V(t_{ib},x_{ib}) = V(T_{ib},x_{ib})$, and inner points $V(t_{iold},x_{iold})$, where $(t_{iold},x_{iold}) \in [t_{stripold},T]\times [x_{min}, x_{max}]$, and 
$V(t_{inew},x_{inew})$, where $(t_{inew},x_{inew}) \in [t_{strip},t_{stripold}]\times [x_{min}, x_{max}]$.
 Training loss is set to be  $loss = \lambda\{
 \| NN(t_{ib},x_{ib}) - V(t_{ib},x_{ib})\| + \|\frac{\partial NN(t,x)}{\partial t}|_{(t,x)=(t_{iold},x_{iold})}
  + H(x,u,\frac{\partial NN(t,x)}{\partial x})|_{(t,x)=(t_{iold},x_{iold})} \|\}
  + \|\frac{\partial NN(t,x)}{\partial t}|_{(t,x)=(t_{inew},x_{inew})}
  + H(x,u,\frac{\partial NN(t,x)}{\partial x})|_{(t,x)=(t_{inew},x_{inew})} \|
 $  
\STATE \textbf{step 3.3}:
\WHILE{ $ loss < loss_{th1}$} 
	\STATE  Train $NN(t,x)$ with a large $\lambda$ and Adam optimizer.	
\ENDWHILE

\STATE \textbf{step 3.4}:
\WHILE{ $ loss < loss_{th2}$} 
	\STATE Set $\lambda = 1$. Train $NN(t,x)$ Adam optimizer.
	\ENDWHILE

\STATE \textbf{step 3.5}:
\IF{ $tstrip > 0$} 
	\STATE $t_{stripold} = t_{strip}$, $t_{strip} = t_{stripold} - \Delta t$, go to step 3.2
\ELSE
	\STATE Stop training
\ENDIF
	
\end{algorithmic}

\end{algorithm}

\section{Simulation results}

\label{sec:simulation}
Here we provide three nonlinear examples.
\subsection{Example 1}

	\begin{equation}
	V^{*} = \min_{u \in U} \int_{0}^{\infty} x^Tx +u^2 =0.5x_1^2+x_2^2
\end{equation}
where
\begin{align*}
		\dot{x_1} &= -x_1+x_2 \\
		\dot{x_2} &= -0.5x_1-0.5x_2+0.5x_1^2x_2 +x_1u \\
\end{align*}
The corresponding HJB is 
\begin{align}
		x_1^2+x_2^2+(-x_1+x_2)V_{x_1}+(-0.5x_1-0.5x_2+0.5x_1^2x_2)V_{x_2} -0.25x_1^2V_{x_2}^2=0\\
	V(0)=0
\end{align}

The training data is sampled with the range $[-1,-1]$ to $[1,1]$.
results with both methods are shown in Fig.\ref{fig:Ex1}. The orange surface is the Analytical solution, while the blue surface is the prediction of neural network. It could be seen that in this case, the method 2 has a better accuracy.

\begin{figure}[!h]
\label{fig:Ex1}
\centering
\includegraphics[width=18cm]{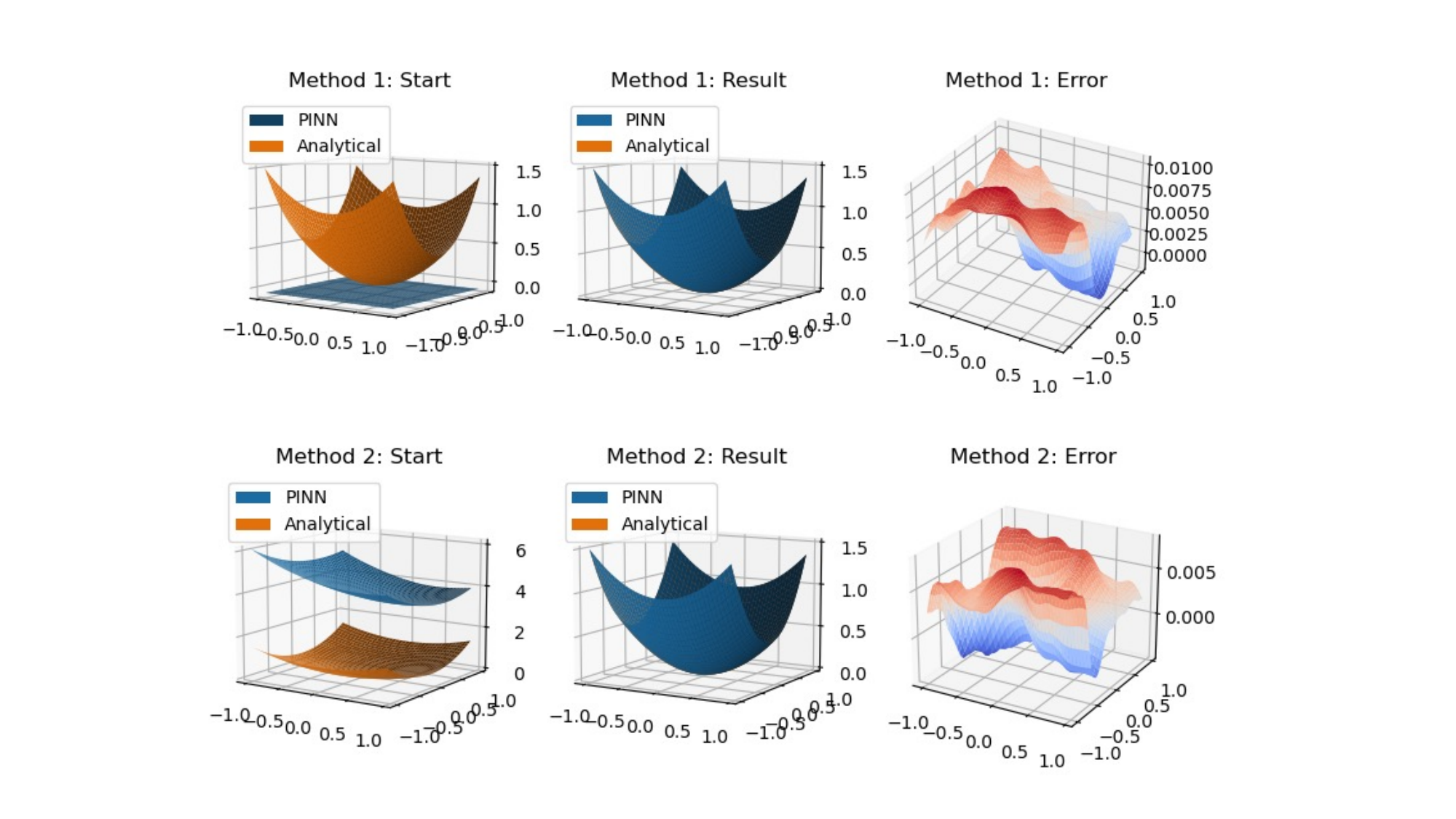}
\caption{Example 1 by method 1}
\end{figure}




\subsection{Example 2}
	\begin{equation}
	V^{*} = \min_{u \in U} \int_{0}^{\infty} x^Tx +u^2 =0.5x_1^2+x_2^2
\end{equation}
where
\begin{align*}
		\dot{x_1} &= -x_1+x_2 \\
		\dot{x_2} &= -0.5x_1-0.5x_2(1-(cos(2x_1)+2)^2)+(cos(2x_1)+2)u \\
\end{align*}
The corresponding HJB is 
\begin{align}
		x_1^2+x_2^2+(-x_1+x_2)V_{x_1}+(-0.5x_1-0.5x_2(1-(cos(2x_1)+2)^2))V_{x_2} -0.25(cos(2x_1)+2)^2V_{x_2}^2=0\\
	V(0)=0
\end{align}

The training data is sampled with the range $[-1,-1]$ to $[1,1]$.
results with both methods are shown in Fig.\ref{fig:Ex2} . The orange surface is the Analytical solution, while the blue surface is the prediction of neural network. In this case, because of the nonlinearity of system and the convex optimization term, the method 1 could not find the global optimum. While the method 2 find the solution directly.

\begin{figure}[!h]
\label{fig:Ex2}
\centering
\includegraphics[width=18cm]{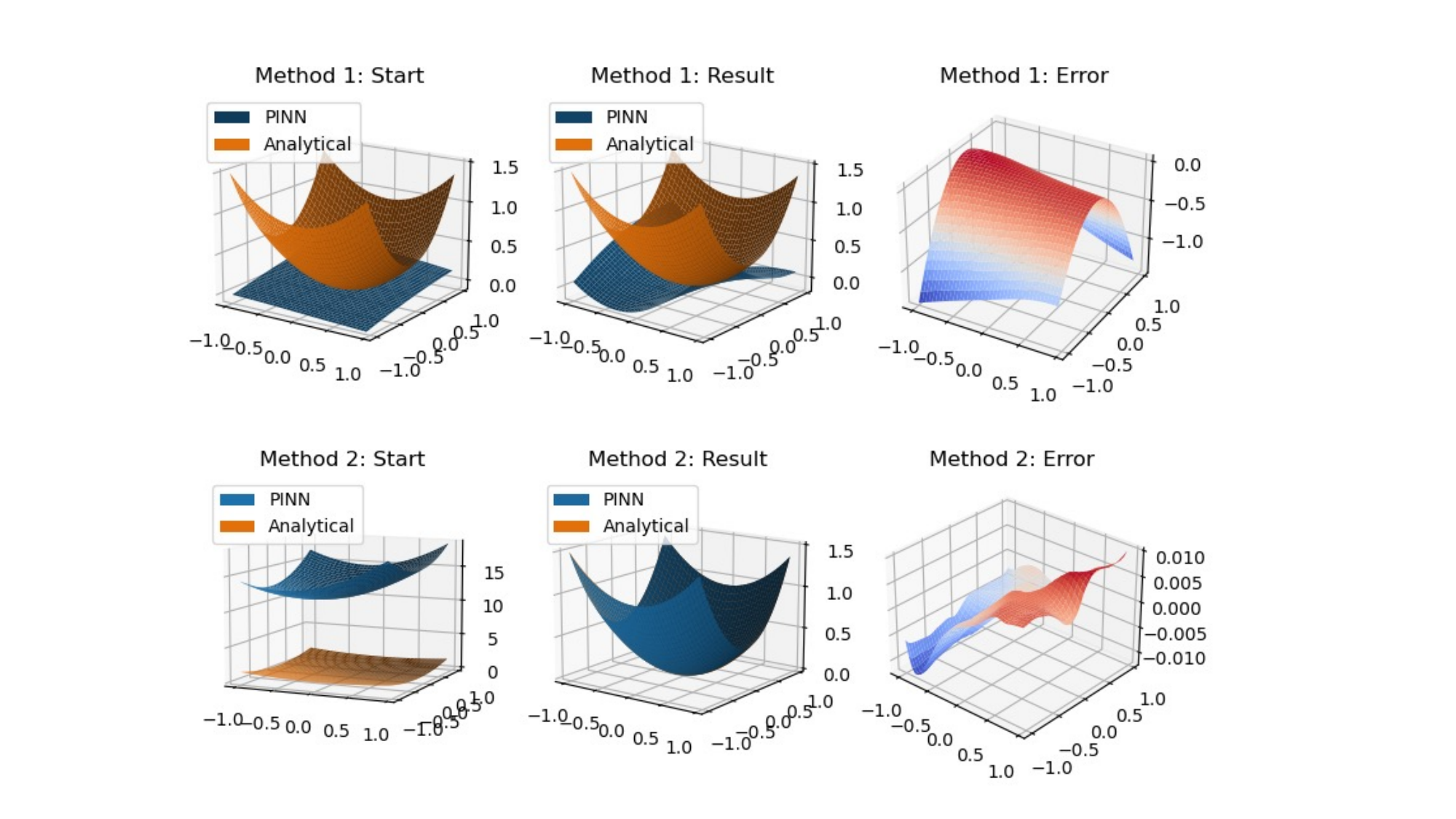}
\caption{Example 2 by method 1}
\end{figure}

%

\subsection{Example 3}

	\begin{equation}
	V^{*} = \min_{u \in U} \int_{0}^{\infty} x_2^2+u^2 =x_1^2(\frac{\pi}{2}+arctan(5x_1))+x_2^2
\end{equation}
where
\begin{align*}
		\dot{x_1} &= x_2 \\
		\dot{x_2} &= -x_1(\frac{\pi}{2}+arctan(5x_1))-\frac{5x_1^2}{2+50x_1^2}+4x_2 +3u \\
\end{align*}
The corresponding HJB is 
\begin{align}
		x_2^2+x_2V_{x_1}+(-x_1(\frac{\pi}{2}+arctan(5x_1))-\frac{5x_1^2}{2+50x_1^2}+4x_2)V_{x_2} -\frac{9}{4} V_{x_2}^2=0\\
	V(0)=0
\end{align}
This example is the most hard one to be trained. The results are given in Fig.\ref{fig:Ex3}. To show that we don't start at some point that is already close to the true solution, the initial function has a really large distance from the true solution, and by method 2, we could see that the PINN could convergent to the true solution. But the accuracy is not as good as the two example before. 

\begin{figure}[!h]
\label{fig:Ex3}
\centering
\includegraphics[width=18cm]{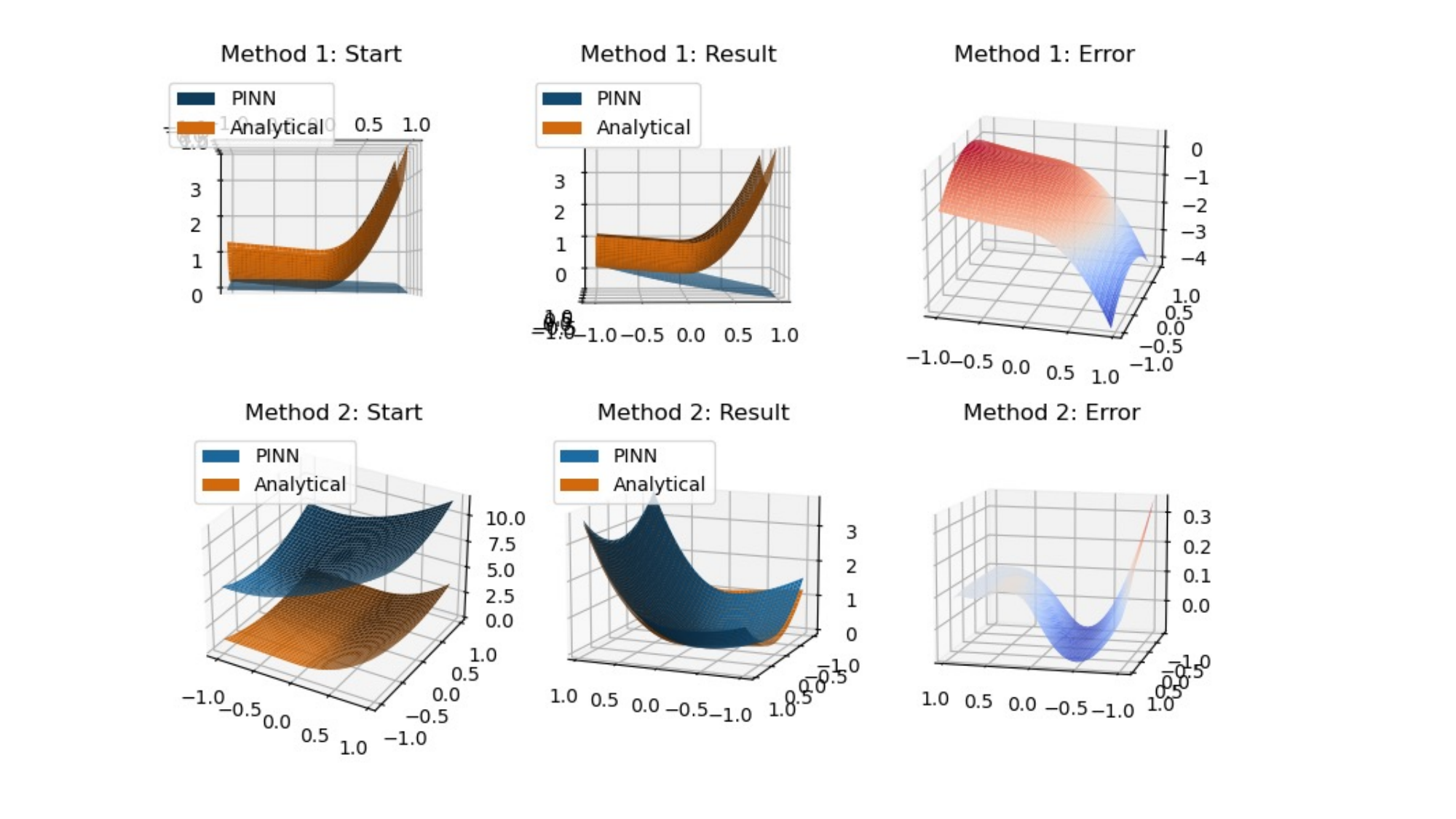}
\caption{Example 3 by method 1}
\end{figure}


\subsection{Example 4}
For a linear system
\begin{align*}
		\dot{x}(t) &= x(t) + u(t) 
\end{align*}
the time limited quadratic cost function is given as
\begin{equation}
	V^{*}(t,x) = \min_{u \in U}V(t,x)=  x^2(T) + \int_{t}^{T} u^2 dt=\frac{2x^2}{1+ e^{2t-2T}} 
\end{equation}
where $T= 10$ in this example. and corresponding HJB is 
\begin{align}
	V_t - 0.25V^2_x + V_xx = 0 \\
	V(10,x)=x^2(10)
\end{align}

The corresponding training result is shown in Fig.\ref{fig:Ex4}. It could be seen that, for time limited case, the method 2 could well approximate the true solution of the HJB.
\begin{figure}[!h]
\label{fig:Ex4}
\centering
\includegraphics[width=17cm]{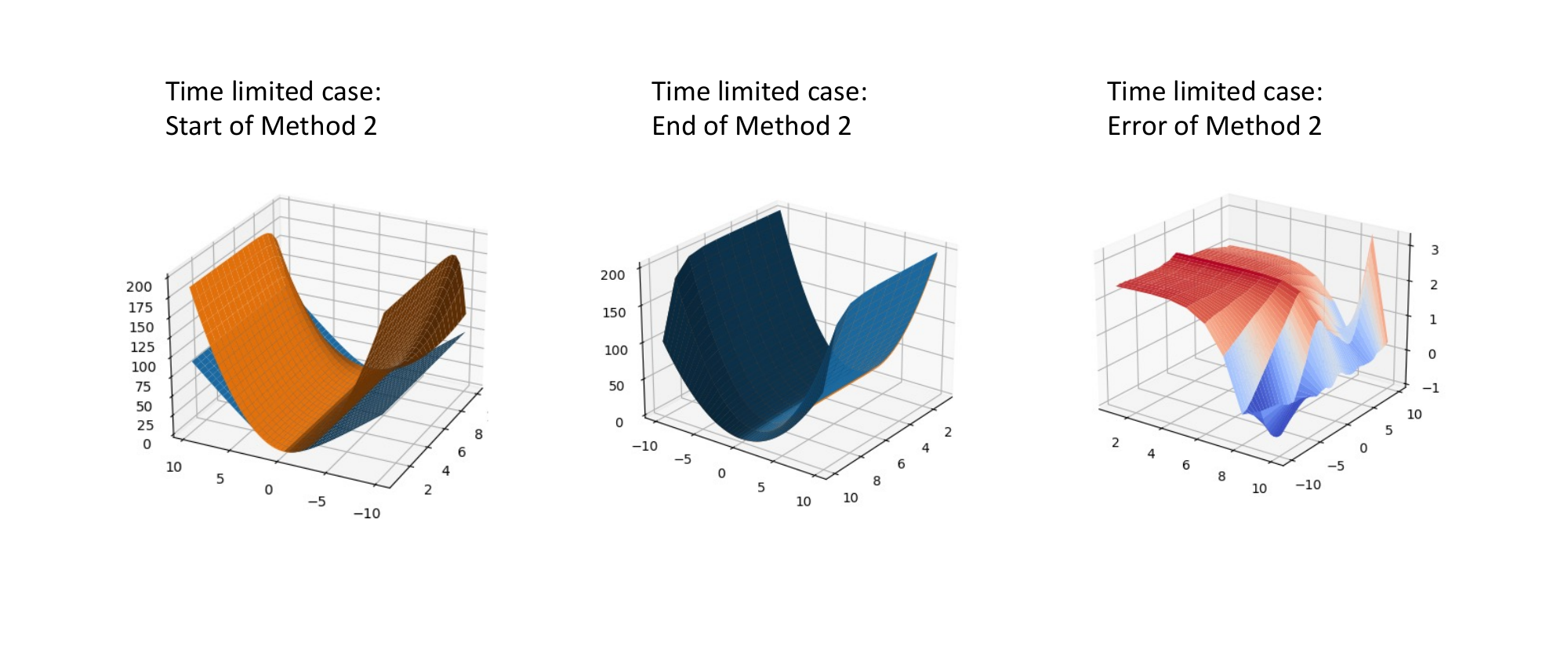}
\caption{Example 4 by method 2}
\end{figure}

\bibliographystyle{IEEEtranS}
\bibliography{sample.bib}

\end{document}